\newcommand{\blue}[1]{#1}
\newcommand{\ignore}[1]{}
\newcommand{\ii}{\mathsf{i}}
\newcommand{\ee}{\mathsf{e}}
\newcommand{\eps}{\varepsilon}
\def\makeletter#1{%
\expandafter \newcommand \csname b#1\endcsname {\mathbb{#1}}%
\expandafter \newcommand \csname c#1\endcsname {\mathcal{#1}}%
\expandafter \newcommand \csname t#1\endcsname {\widetilde{#1}}%
\expandafter \newcommand \csname ct#1\endcsname {\widetilde{\mathcal{#1}}}%
}
\def\makeletters(#1#2){\makeletter#1\ifx#2.\else\makeletters(#2)\fi}
\def\makeletter#1{%
\expandafter \newcommand \csname t#1\endcsname {\widetilde{\csname #1\endcsname}}%
}
\def\makeletters(#1,#2){\makeletter{#1}\ifx#2.\else\makeletters(#2)\fi}
\def\makeSkob#1#2#3{%
\def\LLL{\mathopen{}\mathclose\bgroup\left} \def\RRR{\aftergroup\egroup\right}
\expandafter \edef \csname #1\endcsname #2##1#3{\SkobInner}
\def\LLL{} \def\RRR{}
\expandafter \edef \csname #1O\endcsname #2##1#3{\SkobInner}
\def\LLL{\bigl} \def\RRR{\bigr}
\expandafter \edef \csname #1A\endcsname #2##1#3{\SkobInner}
\def\LLL{\Bigl} \def\RRR{\Bigr}
\expandafter \edef \csname #1B\endcsname #2##1#3{\SkobInner}
\def\LLL{\biggl} \def\RRR{\biggr}
\expandafter \edef \csname #1C\endcsname #2##1#3{\SkobInner}
\def\LLL{\Biggl} \def\RRR{\Biggr}
\expandafter \edef \csname #1D\endcsname #2##1#3{\SkobInner}
}
\def\SkobInner{\LLL(##1\RRR)} \makeSkob{s}[]
\def\SkobInner{\LLL[##1\RRR]} \makeSkob{sk}[]
\def\SkobInner{\LLL\lbrace##1\RRR\rbrace} \makeSkob{sfig}{}{}
\def\SkobInner{\LLL\lfloor##1\RRR\rfloor} \makeSkob{floor}[]
\def\SkobInner{\LLL\lceil##1\RRR\rceil} \makeSkob{ceil}[]
\def\SkobInner{\LLL\langle##1\RRR\rangle} \makeSkob{ip}<>
\def\SkobInner{\LLL\lvert##1\RRR\rangle} \makeSkob{ket}|>
\def\SkobInner{\LLL\lvert##1\RRR\rvert} \makeSkob{abs}||
\def\SkobInner{\LLL\lVert##1\RRR\rVert} \makeSkob{norm}||
\def\SkobInner{\LLL\lVert##1\RRR\rVert_{\noexpand\mathrm F}} \makeSkob{normFrob}||
\def\SkobInner{\LLL\lVert##1\RRR\rVert_{\noexpand\mathrm{tr}}} \makeSkob{normtr}||
\def \elem[#1]{[\![#1]\!]}
\def \bigfrac#1/{\left.#1\right/}
\def \bigfracR/#1.{\left/#1\right.}
\newcommand{\pfstart}{\begin{proof}} 
\newcommand{\pfsketch}{\begin{proof}[Proof sketch]}
\newcommand{\pfend}{\end{proof}}
\newcommand{\itemstart}{\begin{itemize}\itemsep0pt}
\newcommand{\itemend}{\end{itemize}}
\newcommand{\descrstart}{\begin{description}\itemsep0pt}
\newcommand{\descrend}{\end{description}}
\newcommand{\enumstart}{\begin{enumerate}\itemsep0pt}
\newcommand{\enumalpha}{\begin{enumerate}[(a)] \itemsep0pt}
\newcommand{\enumend}{\end{enumerate}}
\renewcommand{\*}{^{\phantom{*}}}
\newcommand{\myhref}[2]{\url{#1}}  
\newcommand{\textsfup}[1]{\textsf{\textup{#1}}}
\newcommand{\twoxor}{\textsfup{\mbox{2-XOR}}}
\newcommand{\ksum}{\textsfup{\mbox{$k$-SUM}}}
\newcommand{\twosum}{\textsfup{\mbox{2-SUM}}}
\newcommand{\hadamard}{\circ}
\newcommand{\comp}{\circ}
\newcommand{\sampledfrom}{\sim}
\newcommand{\Adv}{\overline{\textsfup{Adv}}}
\newcommand{\I}{\textup{I}}    	
\newcommand{\one}{\textsfup{J}}	
\newcommand{\X}{\textsfup{X}}
\newcommand{\Y}{\textsfup{Y}}
\newcommand{\ED}{\textsfup{ED}}
\newcommand{\SED}{\textsfup{SED}}
\newcommand{\HH}{\textsfup{H}}
\newcommand{\psearch}{\textsfup{pSEARCH}}
\newcommand{\F}{\textsfup{F}}
\newcommand{\G}{\textsfup{G}}
\newcommand{\Gr}{\mathbb G}
\newcommand{\Image}{\textsfup{Im}}
\newcommand{\outern}{n}
\newcommand{\innern}{\ell}
\newcommand{\epsn}{\nu}
\newcommand{\epsN}{\nu}
\newtheorem{theorem}{Theorem} 
\newtheorem{definition}{Definition}
\newtheorem{lemma}{Lemma}
\newtheorem{clm}{Claim}
\def\mycommand#1#2{
\expandafter\newcommand \csname#1\endcsname {#2}%
}
\begin{document}

\title{Provably secure key establishment against quantum adversaries}

\author[1]{Aleksandrs Belovs}
\author[2,3]{Gilles Brassard}
\author[4]{Peter H{\o}yer}
\author[5]{Marc Kaplan}
\author[6]{Sophie Laplante}
\author[2]{Louis Salvail}
\affil[1]{University of Latvia\\
  \texttt{stiboh@gmail.com}}
\affil[2]{DIRO, Universit\'e de Montr\'eal\\
  \texttt{\{brassard,\,salvail\}@iro.umontreal.ca}}
\affil[3]{Canadian Institute for Advanced Research}
\affil[4]{Department of Computer Science, University of Calgary\\
  \texttt{hoyer@ucalgary.ca}}
\affil[5]{School of Informatics, University of Edinburgh\\
  \texttt{kapmarc@gmail.com}}
\affil[6]{IRIF, Universit\'e Paris Diderot\\
  \texttt{laplante@irif.fr}}

\maketitle

\begin{abstract}

At \textsc{Crypto}~2011, some of us had proposed a
family of cryptographic protocols for key establishment capable of
protecting quantum \emph{and classical} legitimate parties unconditionally against a \emph{quantum}
eavesdropper in the query complexity model. 
Unfortunately, our security proofs were \blue{unsatisfactory}
from a cryptographically meaningful perspective because they were sound only in a
worst-case scenario. 
Here, we extend our results and prove that for any \mbox{$\eps>0$}, there is a classical protocol that
allows the legitimate parties to establish a common key after $O(N)$ expected queries to a random
oracle, yet any quantum eavesdropper will have a vanishing probability of learning their key
after $O(N^{1.5-\eps})$ queries to the same oracle.
The~vanishing probability applies to a typical run of the protocol.
If~we allow the legitimate parties to use a quantum computer as well, their advantage
over the quantum eavesdropper \mbox{becomes} arbitrarily close to the quadratic advantage
that classical legitimate parties enjoyed over classical eavesdroppers in the seminal 1974
work of Ralph Merkle.
\blue{Along the way, we develop new tools to give lower bounds on the number of quantum queries required to distinguish two probability distributions. This method in itself could have multiple applications in cryptography. We~use it here to study average-case quantum query complexity, for which we develop a new composition theorem of independent interest.}
\end{abstract}

\section{Introduction}

Not taking classified work within secret services into consideration~\cite{Cocks},
Ralph Merkle is the first person to have asked---and solved---the question of
secure communications over insecure channels~\cite{merkle74}.
In~his seminal (rejected!)\ 1974 project for a Computer Security course
at the University of California, Berkeley, he discovered that it is possible for two
people who want to communicate securely to establish a secret key
by communicating over an authenticated channel that provides no protection
against eavesdropping. Merkle's solution to this conundrum offers
\emph{quadratic security} in the sense that if the legitimate parties---codenamed
Alice and Bob---are willing to expend an effort in the order of $N$, for some security
parameter~$N$, they can establish a key that no eavesdropper---codenamed Eve---can
discover with better than vanishing probability without expending an effort in the order of~$N^2$.

This quadratic security may seem unattractive compared to the potential exponential security
entailed by the subsequently discovered key establishment protocols of Diffie and
Hellman~\cite{DH} and Rivest, Shamir and Adleman~\cite{RSA}, to name a few.
However, the security of those currently ubiquitous cryptographic solutions will 
be compromised with the advent of full-scale quantum computers,
as discovered by Peter Shor more than two decades ago~\cite{Shor}.
And~even if a quantum computer is never built,
no~one has been able to prove their security against classical attacks,
nor that of quantum-resistant candidates based, for instance, on short vectors in lattices.
Furthermore, Merkle had already understood in 1974 that quadratic security
\emph{could} be practical if the underlying one-way function (see~below) can be computed very
quickly: if~it takes one nanosecond to compute the function and legitimate users
are willing to spend one second each, a classical adversary who could only invert
the function by exhaustive search would require fifteen expected \emph{years} to break
Merkle's original scheme.

The main interest of Merkle's 
solution is that it offers \emph{provable} security,
at least in the \emph{query model} of computational complexity, 
a model closely related to the random oracle model.
In~this model, we
assume the existence of a \emph{black-box} function \mbox{$f:D \rightarrow R$}
from some domain $D$ to some range~$R$, so that the only way to learn something about
this function is to query the value of $f(x)$ on inputs \mbox{$x \in D$} that can be
chosen arbitrarily. The~\emph{query complexity} of some problem given $f$ is
defined as the expected number of calls to~$f$ required to solve the problem,
using the best possible algorithm. In~our case of interest, we shall consider \emph{random}
black-box functions, meaning that for each \mbox{$x \in D$}, the value of~$f(x)$
is chosen uniformly at random within~$R$, independently of the value of $f(x')$
for any other~\mbox{$x' \in D$}. Provided the size $r$ of $R$ is sufficiently large compared
to the size $d$ of~$D$, such a random function is automatically one-to-one, except
with vanishing probability. The main characteristic of these black-box random functions
that is relevant to the proof of security of Merkle's scheme is that, given a randomly chosen point $y$
in the image of $f$, the only (classical) approach to finding an $x$ so that \mbox{$f(x)=y$}
is exhaustive search: we~have to try $x$'s one after another until a solution is found.
Indeed, whenever we try some $x'$ and find that \mbox{$f(x') \neq y$}, the \emph{only} thing
we have learned is that this particular $x'$ is not a solution. Provided the function is
indeed one-to-one, we expect to have to query the function
$d/2$ times on 
average in order to find the unique solution.

One~may argue that black-box random functions \blue{do not} exist in real life,
but we can \mbox{replace} them in practice with
one-way functions---provided \blue{\emph{they}} exist!---which is what Merkle meant by
``one-way encryption'' in his 1974 class assignment~\cite{merkle74}.
Thus, we can base the \mbox{security} of Merkle's scheme on the \emph{generic} assumption
that one-way functions exist, which is \mbox{unlikely} to be broken by a quantum computer,
rather than the assumption that \emph{specific} computational problems such as factorization
or finding short vectors in lattices are difficult, at least the first one of which is
known not to hold on a quantum computer.
Can~we do better than provable \emph{quadratic} security in the query model?
This question remained open for 35 years, and was finally settled in the negative by
Boaz Barak and Mohammad Mahmoody-Ghidary~\cite{BarMah09}, building on earlier
work of Russell Impagliazzo and Steven Rudich~\cite{ImpagliazzoRudich}: any protocol by
which the legitimate parties can obtain a shared key after $O(N)$ expected queries to a
black-box random function can be broken with $O(N^2)$ expected queries to the same
black~box.

It was apparently noticed for the first time by one of us in 2005,
and published a few years later~\cite{ICQNM},
that Merkle's original 1974 scheme~\cite{merkle74}, as well as his
better known subsequently published \emph{puzzles}~\cite{merkle78},
are broken by Grover's algorithm~\cite{grover} on a quantum computer.
This attack assumes that the eavesdropper can query the function
in quantum superposition, which is perhaps not reasonable if the function
is 
provided as a \emph{physical} classical black box, but is completely reasonable if it is given
by the publicly-available \emph{code} of a one-way function
(as~originally envisioned by Merkle).
If~the legitimate parties are also endowed with a quantum computer,
the same paper~\cite{ICQNM} gave an obvious fix, by which the
legitimate parties can
establish a key after $O(N)$ quantum queries to the black-box function, but
no quantum eavesdropper can discover it with better than vanishing probability
without querying the function $O(N^{3/2})$ times.
That paper made the explicit conjecture that this was best possible when
quantum codemakers are facing quantum codebreakers in the game of provable
security in the random black-box model.
The~issue of protecting classical codemakers
against quantum codebreakers was not addressed in Ref.~\cite{ICQNM}.

At the \textsc{Crypto}~2011 conference~\cite{bhkkls:crypto}, several of us
disproved the conjecture of Ref.~\cite{ICQNM}
with the introduction of a new quantum protocol that no quantum
eavesdropper could break without querying the black-box functions $\blue{\Omega}(N^{5/3})$ times.%
\footnote{\,The word ``function\textbf{s}'' is plural because the 2011 protocol required \emph{two} black-box random functions.}
We~also offered the first protocol provably capable of protecting \emph{classical}
codemakers against \emph{quantum} codebreakers,
although $O(N^{13/12})$ queries in superposition sufficed for the quantum eavesdropper
to obtain the not-so-secret key.
Unfortunately, our security proofs 
were worked out in the traditional computational complexity \emph{worst-case} scenario.
In~other words, it was only proved that any quantum eavesdropper
limited to $o(N^{5/3})$ or $o(N^{13/12})$ queries, depending on whether the legitimate
parties are quantum or classical, 
would be likely to fail \emph{on at least one possible
instance} of the protocol. This did not preclude that most instances of the protocol
could result in insecure keys against an eavesdropper who would work no harder
than the legitimate parties. Said otherwise, our \textsc{Crypto}~2011 result was of 
limited cryptographic significance.

In subsequent work~\cite{bhkkls:arxivv2}, we 
claimed to have provided a proper average-case
analysis of our protocols, rendering them cryptographically meaningful,
so~that any quantum eavesdropper has a vanishing probability of learning the key
after only $o(N^{5/3})$ or $o(N^{7/6})$ queries\,%
\footnote{\,For classical legitimate parties, the~$o(N^{13/12})$ of Ref.~\cite{bhkkls:crypto}
had been improved to~$o(N^{7/6})$ in Ref.~\cite{bhkkls:arxivv2}.},
where the probabilities are taken not only over the execution of the eavesdropping
algorithm but also over the instance of the protocol run by the legitimate parties.
We~also extended our results to two sequences of protocols based on the \ksum{}
problem (Definition~\ref{def:ksum} in Section~\ref{sec:protocols}),
where \mbox{$k \ge 2$} is an integer parameter,
in which the legitimate parties query the black-box
random functions $O(kN)$ times. It~was claimed~that any quantum eavesdropper had
a vanishing probability of learning the key after \smash{$o(N^{\frac12+\frac{k}{k+1}})$}
or \smash{$o(N^{1+\frac{k}{k+1}})$} queries, against the classical or the quantum protocol
parametrized by~$k$,
respectively. Again, this was claimed to hold not only in the
cryptographically-challenged worst-case scenario, but also when the probabilities are taken
over the protocols being run by the legitimate parties.

Unfortunately, all our average-case analyses in Ref.~\cite{bhkkls:arxivv2} were incorrect!
The case \mbox{$k=2$} can be fixed rather easily, 
hence the insufficiency of $o(N^{5/3})$ queries for a quantum-against-quantum protocol
and of $o(N^{7/6})$ queries for a classical-against-quantum protocol in a cryptographically
significant setting can be derived from the incorrect arguments provided in Ref.~\cite{bhkkls:arxivv2}.
However, we also claimed in Ref.~\cite{bhkkls:arxivv2} that the case
\mbox{$k>2$} could be proved in ways ``similar to'' when \mbox{$k=2$}.
This was a mistake due to a fundamental difference in the \ksum{} problem
whether \mbox{$k=2$} or~\mbox{$k>2$}.
Whereas the \twosum{} problem is easily seen to be random self-reducible,
so that its hardness in worst case implies its hardness on average, this does not seem
to be the case for the \ksum{} problem when \mbox{$k>2$}.
In~particular, the worst-case lower bound proved by
Aleksandrs Belovs and Robert \v{S}palek~\cite{spalek:kSumLower}
on the difficulty of solving the \ksum{} problem on a quantum computer does not extend in
any obvious way to a lower bound on 
average. And~without such an average lower bound,
our results claimed in Ref.~\cite{bhkkls:arxivv2} go up in smoke for \mbox{$k>2$}.
Furthermore, for a technical reason explained later, 
even such an average lower bound would not suffice.

In this paper, we overcome all these problems and give a correct and cryptographically
meaningful\,\footnote{\,To~be honest, it is not entirely cryptographically meaningful to restrict
the analysis to the number of calls to the black-box functions, taking no account of
the computing time that may be required outside those calls.
However, if we also restrict the legitimate expected \emph{time} to be in~$O(N)$,
then our quantum protocol with \mbox{$k=3$} remains valid and provably resists any
\smash{$o(N^{7/4})$}-time quantum eavesdropping attack, which was claimed in
Ref~\cite{bhkkls:arxivv2}, but with a fundamentally incorrect proof.} security proof for all our
protocols from Ref.~\cite{bhkkls:arxivv2}.
Consequently, we~prove that for any $\eps>0$ there is a classical protocol that
allows the legitimate parties to establish a common key after $O(N)$ expected queries to
black-box random functions, yet any quantum eavesdropper will have a vanishing probability
of learning their key after $O(N^{1.5-\eps})$ queries to the same oracle.
The~vanishing probability is over the randomness in the actual run of the protocol followed
by that of the eavesdropper's algorithm. If~we allow the legitimate parties to use quantum
computers as well, their advantage over the quantum eavesdropper becomes arbitrarily close
to the quadratic advantage that classical legitimate parties enjoyed over classical eavesdroppers
in the seminal 1974 work of Ralph Merkle~\cite{merkle74}.

Our~results require new tools in quantum query complexity,
which are of independent \mbox{interest}.
\blue{In~particular, we introduce techniques to lower-bound the quantum query complexity
of distinguishing between two probability distributions, which we use to extend the
adversary lower bound method in \mbox{order} to handle average-case complexity,
but they could have other uses in cryptography.
This approach is necessary for the distributions of inputs considered here
because the associated decision problems become trivial on average, which}
prevents \blue{us from} applying the average-case method developed
in Ref.~\cite{belovs:onThePower}.
\blue{Furthermore}, we prove a composition theorem for this new lower bound method,
\blue{extending that of} Ref.~\cite{bhkkls:crypto},
which was valid only to prove cryptographically irrelevant worst-case lower bounds. 
Using these two tools, we prove that any quantum eavesdropper who does not make
a prohibitive number of calls to the black-box functions will fail to break a typical
instance of the protocol, except with vanishing probability.  

This work fits in the general framework of ``Cryptography in a quantum world''~\cite{Harachov},
which addresses the question:
``Is~the fact that we live in a quantum world a blessing or a curse for codemakers?''.
It~is a blessing if we allow quantum communication, thanks to Quantum Key Establishment
(aka~Quantum Key Distribution---QKD)~\cite{BB84}, at least if the protocols
can be imple\-mented faithfully according to theory~\cite{HKL,Makarov}.
On~the other hand, it is a curse if we 
continue to use the current cryptographic infrastructure,
which pretends to secure the Inter\-net at the risk of falling prey
to upcoming quantum computers.
However, it is mostly a draw in the realm of provable query complexity in the black-box model
considered in this paper
since codemakers enjoy a quadratic (or~arbitrarily close to being quadratic) advantage over
codebreakers in both an all-classical or an all-quantum world, at least in terms of query complexity (but~see footnote~\thefootnote\ again).
Furthermore, the known proof that quadratic security is best possible in an all-classical
world~\cite{BarMah09}
does not extend to the all-quantum world, and hence the (unlikely) possibility remains that
a more secure protocol could exist in our quantum world.

The rest of the paper is organized as follows. 
Section~\ref{sec:prelim} lists all the techniques and related notations that are used throughout the paper. 
Section~\ref{sec:protocols} recalls the classical and quantum protocols from Refs~\cite{bhkkls:crypto,bhkkls:arxivv2}.
In Section~\ref{sec:avgadv}, we introduce a new method to prove lower bounds on
\blue{the difficulty of distinguishing between two probability distributions,
which we use to study}
average-case quantum query complexity. This method extends the extensively studied adversary method.
We then apply this method to the \blue{$\ksum$ problem} in Section~\ref{sec:ksum}, which is at the heart of our hardness result.
Finally, in Section~\ref{sec:composition}, we prove a composition theorem for the new
adversary method introduced in Section~\ref{sec:avgadv}.
This allows us to conclude that typical runs of the protocols from
Refs~\cite{bhkkls:crypto,bhkkls:arxivv2} are indeed secure against quantum adversaries.

\section{Preliminaries and Notation}
\label{sec:prelim}

At the heart of this work is a lower bound on the quantum query complexity of a generalisation 
of the $\ksum$ problem.  Many techniques have been given to prove such lower bounds
in the worst-case scenario, including the adversary method~\cite{amb02,hoyer:advNegative,LMRSS11}. 
This method is based on the spectral norm of a matrix, $\Gamma$, indexed in the rows and
columns by inputs to the problem.  Roughly, each entry of the matrix $\Gamma[x,y]\in \mathbb{R}$
can be thought of as representing the hardness of distinguishing inputs $x$ and $y$.  
It~is known that for Boolean functions, the (negative) adversary bound is multiplicative 
under function composition~\cite{hoyer:advNegative}.  For non-Boolean functions, a general composition 
theorem fails to hold, as counterexamples can be found.
Nevertheless, it was shown in Ref.~\cite{bhkkls:crypto} that the adversary method \emph{is}
multiplicative under composition with (non-Boolean) unstructured search problems. 

In this paper, we extend the quantum adversary method to average-case complexity,
which is crucial for cryptographic applications,
and we show that a similar composition property holds for this measure.
As~for the adversary bound, this method is based on the spectral norm of
matrices, and involves probability distributions.
Below, we summarize the notation related to functions, algebra and probabilities, used throughout the paper.

We consider \emph{decision} or \emph{search} problems denoted $\F, \G$ or $\HH$. These problems are on abelian groups, which are denoted $\Gr$, or $\Gr_m$ when we want the order $m$ of the group to appear explicitly. The group operation is denoted ``$+$'' and its inverse ``$-$''. 
For a decision problem $\F$, the inputs in the language $\F$ are called \emph{positive} and the inputs not in the language are \emph{negative}.
We compose our problems with an unstructured search problem to make them harder. To do so, we need to add to the alphabet an element that does not belong to $\mathbb G$. We denote this element~``$\star$''.

Fix two problems $\F: A^\outern \rightarrow B$ and $\G: C \rightarrow A$ for some $\outern \in \mathbb N$. Then, the composed problem
$\F \comp \G^n: C^n \rightarrow B$ is defined by $\F \comp \G^n(x_1, \ldots x_n) = \F(\G(x_1), \ldots, \G(x_n))$ for $(x_1, \ldots x_n) \in C^n$.

For any positive integer $n$ we use $[n]$ to denote the set of $n$ elements
\mbox{$\{0,1,2,\ldots,n-1\}$}.
We~only make use of basic concepts of quantum computing: states, unitary operations and measurements.
These notions are used in Section~\ref{sec:avgadv}, but even there, the calculations boil down to basic linear algebra.
The entries of an $n \times m$ matrix $\Gamma$ are denoted $\Gamma[x,y]$,
where $x \in [n]$ 
and $y \in [m]$. 
For $X \subseteq [n]$ and $Y \subseteq [m]$, $\Gamma^{X,Y}$ is the restriction of $\Gamma$ to the rows and columns in $X$ and $Y\!$, respectively.

The direct sum of spaces, operators, matrices or vectors is denoted ``$\bigoplus$''. The inner product of two states 
(or vectors in an Hilbert space) $\psi$ and $\phi$ is $\ipA <\psi, \phi>$. 
For~a matrix $A$, we use $\norm |A|$ for its spectral norm, that is, its largest singular value, 
and $\normFrob |A|$ for the Frobenius norm, that is, the square root of the sum of the
\blue{squares of the moduli} of its elements.
For~two matrices $A$ and $B$, we denote $A \hadamard B$ their entrywise (or Hadamard) product.
We~make use of the two following matrices: the $n\times n$ identity matrix $\I_n$ and the $n\times n$ all-one matrix $\one_n$.

We use  $\cP$ and $\cQ$ for probability distributions over inputs to the problems. The \emph{support} of a distribution is the
set of elements with non-zero probability. We sometimes identify distributions with vectors.
More precisely, if $p_x$ is the probability of $x$ in $\cP$, we can consider the vector $\cP$ given by the entries $\cP[x] = p_x$.
We~use ``$X\sim \cP$\,'' to denote that the random variable $X$ is sampled from $\cP$. 
In~this case, it is the variable whose probability is given by $\Pr[X =x] = p_x$.
In~the specific case of sampling an element $x$ uniformly at random from a set~$D$, we use $x \in_R D$.
We~also use the indicator function $1_{x\neq y}$ whose value is 1 if $x\neq y$ and 0 otherwise.

We sometimes consider sequences of probabilities, such as the accepting probability $\epsn_n$ of an algorithm (for a decision problem) as a function of the input size~$n$. For simplicity, we often omit  the subscript $n$, in which case ``\,$\epsn$\,'' should be understood as a function of~$n$.
We~call such a sequence $\epsn$  \emph{vanishing} if $\epsn= o(1)$. If $\epsn$ decreases faster than \blue{the inverse of} any polynomial, we say that the event is \emph{negligible}.

\section{Provably Secure Key Establishment Protocols}
\label{sec:protocols}

With the exception of Merkle's more famous ``puzzles''~\cite{merkle78},
\blue{all} key establishment protocols based on black-box random functions
(which Merkle called ``one-way encryption'') begin in a way that is essentially
identical to Merkle's original 1974 idea~\cite{merkle74}, with possible inessential
differences\,\footnote{\,In~Merkle's original scheme, there is no asymmetry between
Alice and Bob, as they both ``guess at keywords'' and share and compare their one-way encryptions
until they discover that they have guessed at the same keyword.
In~all the protocols considered here, Alice goes first and Bob works from there.}.
Given a black-box random function \mbox{$f:D \rightarrow R$}
from some domain $D$ to some range~$R$, Alice chooses random elements
\mbox{$x_i \in_R D$} 
and she obtains \mbox{$y_i = f(x_i)$},
which she sends to Bob over an authenticated channel on which Eve can freely eavesdrop.
This defines the sets $X$ of $x_i$'s and $Y$ of $y_i$'s, of which $X$ is private
information kept by Alice whereas $Y$ becomes known to all parties, including Eve.
Upon receiving this information, Bob's first task is to find one or several preimage(s) under $f$ of
\emph{any} of the points sent by Alice. 

The various schemes that were considered in
Refs~\cite{merkle74,ICQNM,bhkkls:crypto,bhkkls:arxivv2}
differ in how Bob proceeds to find the preimage(s), how many such preimages
he needs to find, and how he informs Alice of which preimage(s) he has found.
In~Merkle's original scheme~\cite{merkle74}, he needs to find a single preimage.
This is done by querying~$f$ on random points in its domain until some $x$
is found such that \mbox{$f(x) = y \in Y\!$}. 
Afterwards, Bob sends $y$ back to Alice, who can find efficiently
the corresponding $x$ because it is among her set~$X$, which she had kept.
This shared $x$ \blue{becomes} their secret key.
The intuition behind the security of this scheme stems from the freedom in Bob's task
to invert $f$ on any element of $Y\!$,
compared to how stringent Eve's~is since she must invert it on the specific element
that Bob had inverted by chance.

To~be more precise, let $N$ be a safety parameter, let the domain of $f$ contain $N^2$
points and its range be \blue{of size} $N^5$, \blue{which} is large enough to ensure that $f$ is one-to-one except with
vanishing probability. If~Alice chooses $N$ random points in the
domain of~$f$ and Bob tries random such points as well until he hits upon an $x$
such that \mbox{$f(x) \in Y\!$}, it is easy to see that both Alice and Bob need query
function $f$ an expected number of $N$ times. However, a classical Eve
\blue{requires an expected} $N^2/2$ queries,
which gives a quadratic advantage to the legitimate parties.

Unfortunately, 
inverting one specific point in the image of $f$ with the help of a quantum computer
requires only \mbox{$\frac{\pi}{4} \sqrt{N^2} = \frac{\pi}{4} N$}
queries to~$f$ by way of Grover's algorithm~\cite{grover},
which is slightly \emph{fewer} than the effort required by the legitimate parties.
This is why Merkle's original scheme is totally broken against
a quantum adversary, as first pointed out in Ref.~\cite{ICQNM}.
In order to restore security, 
two main modifications to Merkle's original scheme have been considered,
as we now proceed to describe.

\subsection{Variations on Merkle's Idea}\label{sec:variations}

If~we require Bob to find $k$ distinct preimages among the $N$ points sent by Alice,
for some~\mbox{$k>1$}, rather than a single one, he will only have to work roughly
$k$ times as hard, provided \mbox{$k \ll N$}. The key shared by Alice and Bob could
then be the concatenation of those preimages in the order in which the corresponding
images were sent by Alice in the first step. But how can Bob tell Alice which preimages
he was able to find in a way that will force Eve to make much more queries than~her?
A~first solution was proposed in Ref.~\cite{bhkkls:crypto} for the case~\mbox{$k=2$},
but a much simpler one was given subsequently in Ref.~\cite{bhkkls:arxivv2} for
arbitrary~$k$.
The idea is to introduce a second black-box random function $t$ from the same domain
to some sufficiently large group~$\Gr$.
If~Bob finds preimages \mbox{$x_{i_1}, x_{i_2}, \ldots, x_{i_k} \in X$},
with \mbox{$1 \le i_1 < i_2 < \cdots < i_k \le N$},
and sends \mbox{$w=t(x_{i_1})+t(x_{i_2})+\cdots t(x_{i_k})$} to Alice,
she needs only call black-box function $t$ on the $N$ points she had kept in~$X$
in order to
determine Bob's $k$ preimages,
provided the order
of $\Gr$ was chosen sufficiently large to ensure the uniqueness of the
solution, except with vanishing probability. Taking the order to be $N^{4k+1}$
is sufficient to ensure this.
Furthermore, she can do this efficiently, in terms of computing time, when~\mbox{$k=2$}.
Hence, Alice needs to query each of functions $f$ and $t$ exactly $N$ times,
whereas Bob needs to query function $f$ an expected $O(kN)$ times and
function $t$ exactly $k$ times.

How~difficult is the cryptanalytic task for quantum Eve, who has seen the $y$'s sent
from Alice to Bob and the single $w$ sent from Bob to Alice?
We~gave an explicit algorithm based on quantum walks~\cite{MNRS} in Hamming graphs
in Ref.~\cite{bhkkls:arxivv2}, which allows her to discover the secret key after
$O(N^{1/2+k/(k+1)})$ calls to the black-box functions. 
In~the same paper, we claimed that a matching $\Omega(N^{1/2+k/(k+1)})$ lower bound holds
for a typical instance of the protocol, which is formally stated in Theorem~\ref{main-class-thm}
below, but the proof proposed in Ref.~\cite{bhkkls:arxivv2} fails for \mbox{$k>2$} in a way that cannot be repaired.
The main purpose of the present paper is to offer a correct proof of this theorem.
It~follows that for any fixed \mbox{$\eps>0$}, there is a \emph{classical} key establishment protocol (taking \mbox{$k=\lfloor 1/\eps\rfloor$}) that
allows the legitimate parties to establish a shared key after $O(N)$ expected queries to
black-box random functions $f$ and $t$, yet any \emph{quantum} eavesdropper will have a
vanishing probability of learning their key after $O(N^{1.5-\eps})$ queries to the same oracle.
If~we take account of computational complexity in addition to query complexity, we must
be content with \mbox{$k=2$}, in which case the claim is much more modest, but still the
quantum codebreaker must work more than linearly harder than the classical codemakers.
Along the way, we need to develop in Section~\ref{sec:avgadv} new tools for the study of
\emph{average-case} quantum query complexity, which had essentially remained virgin
territory despite its obvious importance, in~particular but not only for cryptography.

The second modifications to Merkle's original scheme that has been
considered~\cite{ICQNM,bhkkls:crypto,bhkkls:arxivv2} is to
play a fair game in allowing the codemakers to use quantum computers as well.
The~first benefit is that we can enlarge the domain of $f$ to contain $N^3$ points. 
If~Alice proceeds exactly as before,
Bob can use an extension of Grover's algorithm known as BBHT~\cite{BBHT}
in order to find random preimages of the $N$ image points initially sent by Alice
at the cost of \mbox{$O(\sqrt{N^3/N}\,) = O(N)$} queries per preimage, provided
\mbox{$k \ll N$}.
This increase in the domain size of $f$, and correspondingly of~$t$,
makes it significantly harder for a quantum eavesdropper
to solve the conundrum and discover the key shared by Alice and Bob.
Indeed, we also prove Theorem~\ref{main-quant-thm}, stated below,
to the effect that no cryptanalytic attack can succeed on a typical instance of
the protocol, except with vanishing probability, short of making
$\Omega(N^{1+k/(k+1)})$ queries to the black-box functions.
Again, this theorem was claimed in Ref.~\cite{bhkkls:arxivv2} but its proof
was fundamentally flawed for~\mbox{$k>2$}.
Taking $k$ sufficiently large, this offers a quantum-against-quantum security
that is arbitrarily close to the quadratic security that the original scheme of
Merkle~\cite{merkle74} offered in the classical-against-classical scenario.
The~second benefit to allowing the codemakers to use quantum computers
is that now a quantum Alice 
can be efficient in terms of
computation time, in addition to query complexity, even when \mbox{$k=3$}.
According to Theorem~\ref{main-quant-thm}, we get an $\Omega(N^{7/4})$
security guarantee for a protocol that could become practical once sufficiently powerful
quantum computers start to seriously threaten
the security of the current Internet cryptographic infrastructure.
This is the most secure \emph{proven} solution ever discovered to the conundrum of post-quantum
cryptography~\cite{Harachov} when all parties have equal quantum computing capabilities,
at least in the random oracle model, and its security
is reasonably close to that of Merkle's provably optimal scheme in an all-classical world
but otherwise in the same model.

\subsection{The \boldmath{\ksum} Problem}

The security of the protocols that we study is based on the $\ksum$ problem, which
consists in searching for $k$ elements among $N$ in some abelian group $\Gr$
whose sum is a given value~\mbox{$w \in \Gr$}\@.

\begin{definition}[\boldmath{\ksum} problem]\label{def:ksum}
Given an abelian group $\Gr$, a function \mbox{$t:D \rightarrow \Gr$} for some domain~$D$,
a \emph{target} \mbox{$w \in \Gr$} and $N$ distinct elements \mbox{$x_1, x_2, \ldots, x_N \in D$},
the problem is to find $k$ indices \mbox{$1 \le i_1 < i_2 < \cdots < i_k \le N$} such that
\mbox{$w=\sum_{j \in 1}^k t(x_{i_j})$},
provided a solution exists.
The \emph{decision} version of \ksum{} is to decide whether or not a solution exists.
\end{definition}
It is {crucial} to understand that we are not interested in how much computation \emph{time}
would be required to find a solution, if one exists. Rather, we want to minimize the
\emph{number of calls} to function $t$ that will be required.
Naturally, a quantum algorithm is allowed to query~$t$ on superpositions of elements of~$D$.

When~\mbox{$k=1$}, this is simply the \emph{unstructured search problem},
which consists in finding $i$ such that \mbox{$t(x_i)=w$}, provided it exists.
When~\mbox{$k=2$} and $\Gr$ is the group of bit strings of a given length
under bitwise exclusive-or, $\ksum$ takes the name of~$\twoxor$.
In~turn, when \mbox{$w=0$}, $\twoxor$ becomes the search version of the
Element Distinctness (\ED) problem, which consists in finding a collision
in a given function if it is not one-to-one.

\begin{definition}[Element Distinctness (\ED) problem]\label{def:ED}
Given a function \mbox{$t:D \rightarrow R$}, the \emph{decision} element distinctness (\ED) problem
is to decide whether or not this function is one-to-one.
\end{definition}

\begin{definition}[Search version of \ED]
Given a function \mbox{$t:D \rightarrow R$}, the \emph{search} version of
the element distinctness problem (\SED) is to find a pair of distinct \mbox{$x,x' \in D$}
such that \mbox{$t(x)=t(x')$}, provided such a pair exists.
\end{definition}

Quantum lower bounds have been proved on all these
problems~\cite[etc.]{AaronsonShi,spalek:kSumLower}, but only in the worst-case
scenario, which is most frequently studied in the field of computational and query complexity.
For~some of these problems, such as \ED, \SED, \twoxor{} and \twosum,
a simple \emph{classical} randomized reduction suffices for proving their difficulty
on average from their difficulty in the worst case even in the quantum
setting, at least if we add the promise that if there is a solution, then it is
unique.
However, this does not appear
to be the case for \ksum{} when~\mbox{$k>2$}. Our~main mistake in Ref.~\cite{bhkkls:arxivv2}
was to take such a reduction for granted for arbitrary $k$ after having nearly proved it
in the case~\mbox{$k=2$}. ``Nearly'' because the proof for \mbox{$k=2$} was flawed,
albeit easy to repair. Not~so for \mbox{$k>2$}, however.
In~order to prove the security of the key establishment protocols described above
in a cryptographically meaningful context, we need to prove the difficulty of
\ksum{} on average for arbitrary~$k$, which requires new quantum lower bound
techniques. In~fact, we need to prove the difficulty on average of a \emph{composed}
version of~\ksum{}, defined below in Section~\ref{sec:cksum}, which does not follow by a classical reasoning
from the average difficulty of plain~\ksum{}. Therefore, we also have to develop a new
composition theorem that works on average as well.

The first quantum lower bound discovered among these problems was for the decision
element distinctness problem. Aaronson and Shi~\cite{AaronsonShi} proved that this problem requires
$\Omega(d^{2/3})$ queries to~$t$ in the worst case, where $d$ is the cardinality
of domain~$D$.
There was a technical condition in their original proof that required
\blue{\mbox{$r \ge d^2$}, where $r$ is}
the cardinality of range~$R$,
but that condition was subsequently lifted~\blue{\cite{amb03,Kutin}}.
Later, Belovs and \v{S}palek~\cite{spalek:kSumLower} proved that solving \ksum{} requires
$\Omega(N^{k/(k+1)})$ queries to~$t$ in the worst case, provided \mbox{$m \ge N^k$},
where $m$ is the order of group~$\Gr$
\blue{and $N$ is as in Definition~\ref{def:ksum}}.

Even though the technique used by Aaronson and Shi was adequate only to prove worst-case
lower bounds, it is elementary to conclude by a classical reasoning that the hardness in worst-case
of \ED{} implies the same hardness on average for \ED, \SED{} and \twoxor.
But, as we said already, a completely new technique, which we develop in Section~\ref{sec:avgadv},
is required to prove a matching hardness result for \ksum{} on average, which is stated as
Theorem~\ref{thm:1average} in Section~\ref{sec:ksum}.

However, even this is not sufficient to derive the security of the key establishment protocols
described above in a cryptographically meaningful manner.
Indeed, the eavesdropper is not faced with an instance of \ksum{}, as specified in
Definition~\ref{def:ksum}.
He~learns the value of $w$ when Bob transmits it to Alice,
and he has access to black-box function~$t$, but he does not know the $x$'s,
which are kept secret by Alice. Instead, he learns the image of those $x$'s by function~$f$,
which we called the $y$'s, when Alice sent them to Bob in the first step of the protocol.
In~fact, he has to solve the more difficult \emph{Hidden}  \ksum{} problem,
which we now proceed to describe.

\subsection{Hidden and Composed \boldmath{\ksum} Problems}\label{sec:cksum}
\label{sec:hidden}

The hidden \ksum{} problem, defined below, corresponds precisely to the task
facing the eavesdropper.

\begin{definition}[Hidden \boldmath{\ksum} problem]\label{def:hksum}
Given two sets $D$ and $R$, an abelian group $\Gr$,
two functions \mbox{$f:D \rightarrow R$} and \mbox{$t:D \rightarrow \Gr$},
$N$ distinct elements \mbox{$y_1, y_2, \ldots, y_N \in \Image(f)$},
and a \emph{target} \mbox{$w \in \Gr$}, 
the problem is to find $k$ indices \mbox{$1 \le i_1 < i_2 < \cdots < i_k \le N$}
and a preimage \smash{$x_{i_j}$} under $f$ for each $y_{i_j}$, \mbox{$1 \le j \le k$},
meaning that \mbox{$f(x_{i_j})=y_{i_j}$},
such that
\mbox{$w=\sum_{j=1}^k t(x_{i_j})$},
provided a solution exists.
The \emph{decision} version of hidden \ksum{} is to decide if a solution exists.
\end{definition}

In order to prove lower bounds on the quantum cryptanalytic task of breaking typical
\mbox{instances} of
the protocols described in Section~\ref{sec:variations}, we proceed in two steps.
First we have to prove the hardness of the hidden \ksum{} problem on average.
Then, we have to exhibit a reduction that shows how to solve an average instance of
the hidden \ksum{} problem using an adversary who thinks he is breaking
a typical instance of the key establishment protocol.
To prove the hardness of the hidden \ksum{} problem on average,
it helps to consider a more structured version of it, which is given by
the composition of \ksum{} with a search problem called \psearch{}, defined below.

\begin{definition}[\psearch{} problem] \label{def:psearch}
Let $A$ be some set and $\star$ a symbol not in~$A$.
Consider the set $P$ of strings $(a_1, \ldots, a_{\innern})$ in $(A\cup\{\star\})^{\innern}$
with the promise that exactly one value is not~$\star$.
The problem $\psearch_\innern: P \rightarrow A$ consists in finding this non-$\star$
value by making queries that take
$i$ as input and return $a_i$, $1\leq i \leq \innern$.
\end{definition}

An \blue{equivalent} formulation of the \ksum{} problem would consist in
a target $w$ in abelian group $\Gr$ and a list
\mbox{$(t_1, t_2, \ldots, t_N)$} of elements of~$\Gr$.
The problem is to find $k$ indices \mbox{$1 \le i_1 < i_2 < \cdots < i_k \le N$} such that
\mbox{$w=t_{i_1}+t_{i_2}+\cdots+t_{i_k}$}. We~are charged for accessing each~$t_i$
given~$i$. This is equivalent to Definition~\ref{def:ksum} simply by taking
\mbox{$t_i=t(x_i)$}, but it is more convenient since it allows us to consider the
composition of \ksum{} with $N$ instances of \psearch.
Thus we define the \emph{Composed} version of \ksum{} as follows.

\begin{definition}[Composed \boldmath{\ksum} problem]\label{def:cksum}
Given a target $w$ in abelian group $\Gr$ and  $N$ \mbox{instances} of the $\psearch_\innern$
problem using $\Gr$ as set~$A$, we want to solve the \ksum{} problem with $t_i$
being the only non-$\star$ element in the $i^{\mathrm{th}}$ instance of
$\psearch_\innern$. Said otherwise, this is the composition of \ksum\ and $\psearch_\innern$ denoted
\mbox{$\ksum \comp\psearch_\innern^N$}.
\end{definition}

The composed \ksum{} problem (Definition~\ref{def:cksum}) is similar to its 
hidden variant (Definition~\ref{def:hksum}), except that it is more structured,
hence easier.
Specifically, the $x_i$'s that serve to define \mbox{$t_i = t(x_i)$} in the hidden version,
\mbox{$1 \le i \le N$},
can be \emph{a~priori} any element of~$D$,
whereas they are put in $N$ ``buckets'' of size $\innern$ in the composed version.
If~we choose the size of $D$ to be the product of $N$ and~$\innern$,
any algorithm capable of solving the hidden version can serve directly
to solve the composed version simply by taking no account of the
additional information provided by the buckets. Moreover, a random instance
of the composed version can be transformed into a random instance of
the hidden version, essentially by mixing the buckets. It~follows that any
lower bound on the composed problem translates directly into the same
lower bound on the hidden problem, \emph{mutatis mutandis}.

In~Sections~\ref{sec:avgadv} to~\ref{sec:composition}, which are more technical,
we give a lower bound on the composed problem in a series of steps.
First, we give a new general method to prove lower bounds for the average-case quantum query complexity (Section~\ref{sec:avgadv}). 
This method is closely related to the technique given in Ref.~\cite{belovs:variations},
albeit with essential differences. 
Second, building on techniques from Refs~\cite{spalek:kSumLower,belovs:onThePower}, 
we show a lower bound on the average-case quantum query complexity of \ksum~(Section~\ref{sec:ksum}).
Third, we show a composition theorem for average-case quantum query complexity, which allows 
us to conclude with Theorem~\ref{thm:bucket} (Section~\ref{sec:composition}).

When we apply this theorem with the parameters that correspond to the
protocols described in Section~\ref{sec:variations}, we should take
\mbox{$\outern=N$}, which is the number of images sent by Alice in the first step
of any of these protocols and therefore also the number of buckets.
Furthermore, we should take the product of $\innern$,
the size of the buckets, with $\outern$, the number of buckets, to
correspond to the size of the domain $D$ used in the protocols.

Putting it all together, Theorem~\ref{thm:bucket} gives us the following
lower bound on the difficulty to solve the hidden \ksum{} problem if the
domain $D$ of functions $f$ and $t$ contains $d$ elements.

\begin{theorem}\label{th:hiddenksum}
Any quantum algorithm that uses at most $T$ queries to
find a solution to the hidden \ksum{} problem with success probability
at least \mbox{$\epsN_N>0$} on average over the uniform distribution on
positive instances requires
\[
\frac T{\epsN_N} = \Omega\sB[\sqrt{d/N-1}\; N^{k/(k+1)} ]
\]
provided $m= \omega\Big(N^{k+\tfrac 2 {k+1}} \Big)$,
where $m$ is the order of the underlying abelian group.
\end{theorem}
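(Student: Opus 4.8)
The plan is to derive Theorem~\ref{th:hiddenksum} from the composition-theorem lower bound on the \emph{composed} \ksum{} problem (Definition~\ref{def:cksum}), namely Theorem~\ref{thm:bucket}, together with a reduction showing that the \emph{hidden} \ksum{} problem (Definition~\ref{def:hksum}) is no easier than the composed one.

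First I would fix parameters. Applying Theorem~\ref{thm:bucket} with $\outern=N$ buckets and bucket size $\innern=d/N$, the domain $[\outern]\times[\innern]$ of $\ksum\comp\psearch_\innern^N$ has exactly $d$ elements, matching the domain $D$ of the hidden problem. Being the conclusion of a composition theorem, the bound it produces is the product of the average-case quantum bound $N^{k/(k+1)}$ for plain \ksum{} (Theorem~\ref{thm:1average}) with the per-bucket search factor $\sqrt{\innern-1}$ contributed by $\psearch_\innern$ (Definition~\ref{def:psearch}), and its group-order hypothesis is exactly $m=\omega(N^{k+2/(k+1)})$ since $\innern$ does not enter that condition. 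Substituting $\innern=d/N$ yields $T/\epsN_N=\Omega\bigl(\sqrt{d/N-1}\;N^{k/(k+1)}\bigr)$ for the composed problem.

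Second, the reduction. I would show that any $T$-query quantum algorithm solving the hidden \ksum{} problem with success probability at least $\epsN_N$ on average over uniform positive instances yields an $O(T)$-query algorithm solving $\ksum\comp\psearch_\innern^N$ with essentially the same success probability on average over uniform positive instances; this transfers the composed lower bound to the hidden problem and finishes the proof. Given a composed instance, flatten its $N$ buckets into one domain of size $d=N\innern$ and apply a uniformly random relabelling of that domain---this is the ``bucket mixing'': the one-special-element-per-bucket structure becomes a uniformly random $N$-subset of $[d]$, distributed exactly as the set of $f$-preimages $\{f^{-1}(y_j)\}_j$ in a uniform positive hidden instance. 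Then simulate the two hidden oracles: a query to $f$ or to $t$ at a domain point is answered with a single query to the composed oracle, which reveals whether that point is the special element of its bucket and, if so, its $\Gr$-value; off the special set one assigns $f$ an injective ``garbage'' image disjoint from $\{y_1,\dots,y_N\}$ and fresh uniformly random $\Gr$-values for $t$, so that each $y_j$ is forced to have its unique $f$-preimage at the (still unqueried) special element of bucket~$j$. Handing the solver the list $y_1,\dots,y_N$ and translating its output back completes the reduction; since each simulated oracle call costs $O(1)$ composed-oracle calls, the query count is preserved up to a constant.

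The main obstacle is making this reduction faithful both as a quantum reduction and distributionally. The hidden solver queries $f$ and $t$ in superposition, so the simulated oracles must be honest unitaries implementable with $O(1)$ composed-oracle calls and with coherently maintained internal randomness (the relabelling, the garbage image of $f$, the off-special $t$-values); and the distribution of instances presented to the solver must coincide with---or be statistically close to---the uniform distribution on positive hidden instances, for otherwise its average-case guarantee does not apply. The genuinely delicate point is supplying $y_1,\dots,y_N$ up front without ``knowing'' the hidden special positions: one must arrange the image of the simulated $f$ so that $f^{-1}(y_j)$ is necessarily the special element of bucket $j$, while simultaneously keeping $f$ injective, keeping the simulation cost at $O(1)$ composed queries per step, and introducing no structure in $f$ or in the off-special $t$-values that a query-bounded solver could exploit to perform better on the simulated instance than on a genuine one---in particular one must verify, using $m=\omega(N^{k+2/(k+1)})$, that the fresh random $\Gr$-values create no spurious solutions.
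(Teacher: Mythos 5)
Your proposal follows the paper's own route: the paper obtains Theorem~\ref{th:hiddenksum} exactly by instantiating Theorem~\ref{thm:bucket} with $\outern=N$ and $\innern=d/N$ (so the factor $\sqrt{\innern-1}=\sqrt{d/N-1}$ and the condition $m=\omega(N^{k+2/(k+1)})$ carry over verbatim), combined with the observation that an algorithm for the hidden \ksum{} problem solves the composed problem once a random composed instance is turned into a random hidden instance by ``mixing the buckets.'' Your detailed discussion of simulating the $f$ and $t$ oracles and matching the input distributions is an elaboration of precisely the reduction step the paper only sketches, so this is essentially the same argument.
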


\subsection{The Security of Key Establishment}
\label{sec:reduction}

We proved (correctly!)\ in Ref.~\cite{bhkkls:arxivv2} that any eavesdropper who
succeeds in obtaining the key with non-vanishing success probability $\epsN$ in any of the protocols described in Section~\ref{sec:variations},
after making no more than $T$ queries, on average over the runs 
of the protocol, can be used to solve the hidden \ksum{} problem with
the same parameters. Therefore, using the fact that \mbox{$d=N^2$} for the
classical protocols and \mbox{$d=N^3$} for the quantum protocol, 
we can apply Theorem~\ref{th:hiddenksum} to conclude that the protocols are secure
according to the following theorems.

\begin{theorem}\label{main-class-thm}
Any quantum eavesdropping strategy that makes
\mbox{$o\big(N^{\frac12+\frac{k}{k+1}}\big)$}
queries to the black-box functions against a typical run of the classical protocol using parameter~$k$
will fail to recover the key, except with vanishing probability.
\end{theorem}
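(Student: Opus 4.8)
The plan is to instantiate Theorem~\ref{th:hiddenksum} with the parameters of the classical protocol and then invoke the reduction from eavesdropping to the hidden \ksum{} problem that was established (correctly) in Ref.~\cite{bhkkls:arxivv2}. For the classical protocol using parameter~$k$, the domain of $f$ and $t$ has $d = N^2$ elements, and the group $\Gr$ has order $m = N^{4k+1}$. First I would check that this choice of $m$ satisfies the hypothesis $m = \omega\big(N^{k+\frac{2}{k+1}}\big)$ of Theorem~\ref{th:hiddenksum}: since $k + \frac{2}{k+1} < k+1 \le 4k+1$ for every integer $k \ge 1$, the condition holds comfortably, in fact with polynomial room to spare. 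So Theorem~\ref{th:hiddenksum} applies and tells us that any quantum algorithm solving the hidden \ksum{} problem with success probability at least $\epsN_N$ on average over positive instances must use $T$ queries with
\[
\frac{T}{\epsN_N} = \Omega\!\sB[\sqrt{d/N - 1}\; N^{k/(k+1)}] = \Omega\!\sB[\sqrt{N^2/N - 1}\; N^{k/(k+1)}] = \Omega\!\sB[N^{1/2}\, N^{k/(k+1)}] = \Omega\!\sB[N^{\frac12 + \frac{k}{k+1}}].
\]

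Next I would set up the contrapositive in the language of the theorem statement. Suppose, for contradiction, that there is a quantum eavesdropping strategy making $T = o\big(N^{\frac12+\frac{k}{k+1}}\big)$ queries to the black-box functions that recovers the key with success probability $\epsN$ that does \emph{not} vanish, i.e. $\epsN = \Omega(1)$ along some subsequence of~$n$ (equivalently, $\epsN_N = \omega(1/\polylog)$, or whatever the precise negation of ``vanishing'' one fixes). By the reduction of Ref.~\cite{bhkkls:arxivv2}, this eavesdropper yields a quantum algorithm for the hidden \ksum{} problem, on average over the uniform distribution on positive instances, with the \emph{same} query count $T$ and the same success probability $\epsN$. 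Plugging into the bound above gives $T/\epsN = \Omega\big(N^{\frac12+\frac{k}{k+1}}\big)$, hence $T = \Omega\big(\epsN \cdot N^{\frac12+\frac{k}{k+1}}\big) = \Omega\big(N^{\frac12+\frac{k}{k+1}}\big)$ since $\epsN$ is bounded below by a positive constant on the subsequence, contradicting $T = o\big(N^{\frac12+\frac{k}{k+1}}\big)$.

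The only genuinely delicate point — and the step I would write out most carefully — is making sure the reduction from Ref.~\cite{bhkkls:arxivv2} really delivers an \emph{average-case} instance of hidden \ksum{} under the \emph{uniform distribution on positive instances}, which is exactly the distribution for which Theorem~\ref{th:hiddenksum} gives a bound. In a typical run of the protocol, Alice picks $x_1,\dots,x_N \in_R D$ and publishes $y_i = f(x_i)$; because $r = N^5 \gg N^2 = d$, the function $f$ is one-to-one except with vanishing probability, and conditioned on injectivity the induced distribution on $(y_1,\dots,y_N,w)$ together with the oracles $(f,t)$ is precisely the uniform distribution on positive hidden-\ksum{} instances with the uniqueness promise — up to the relabelling/bucketing argument sketched in Section~\ref{sec:cksum} that lets composed-\ksum{} lower bounds transfer to hidden-\ksum{} lower bounds \emph{mutatis mutandis}. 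One must also absorb the vanishing-probability event (non-injectivity of $f$, or a positive instance failing to have a unique solution because $m$ is not large enough) into the overall success probability: these contribute only an $o(1)$ additive loss, which is harmless since we are assuming $\epsN$ is non-vanishing. Everything else is bookkeeping: substituting $d = N^2$, verifying the group-order hypothesis, and reading off the exponent $\frac12 + \frac{k}{k+1}$.
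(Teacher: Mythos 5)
Your proposal is correct and follows essentially the same route as the paper: invoke the (correctly proven) reduction from Ref.~\cite{bhkkls:arxivv2} turning a successful eavesdropper into an average-case hidden \ksum{} solver with the same $T$ and $\epsN$, then apply Theorem~\ref{th:hiddenksum} with $d=N^2$ and $m=N^{4k+1}$ to read off the exponent $\frac12+\frac{k}{k+1}$. (Only a cosmetic quibble: at $k=1$ one has $k+\frac{2}{k+1}=k+1$, so your strict inequality chain should rest directly on $4k+1>k+\frac{2}{k+1}$, which of course still gives $m=\omega\big(N^{k+\frac{2}{k+1}}\big)$.)
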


\begin{theorem}\label{main-quant-thm}
Any quantum eavesdropping strategy that makes
\mbox{$o\big(N^{1+\frac{k}{k+1}}\big)$}
queries to the black-box functions against a typical run of the quantum protocol using parameter~$k$
will fail to recover the key, except with vanishing probability.
\end{theorem}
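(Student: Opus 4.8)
The plan is to obtain Theorem~\ref{main-quant-thm} as an essentially immediate consequence of Theorem~\ref{th:hiddenksum} combined with the query-preserving reduction of Ref.~\cite{bhkkls:arxivv2} that turns any eavesdropper against a run of the protocol into an algorithm for the hidden \ksum{} problem. Fix the parameter $k$ and let $\epsN_N$ denote the probability --- over the randomness in the actual run of the quantum protocol with parameter $k$, followed by that of the eavesdropper's algorithm --- that a given eavesdropping strategy making at most $T$ queries recovers the key. We must show $\epsN_N = o(1)$ whenever $T = o\big(N^{1+k/(k+1)}\big)$. If $\epsN_N = 0$ there is nothing to prove, so assume $\epsN_N>0$. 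The first step is to invoke the reduction of Ref.~\cite{bhkkls:arxivv2}: it produces a quantum algorithm that solves the hidden \ksum{} problem using at most $T$ queries and succeeding with probability at least $\epsN_N$ on average over the uniform distribution on positive instances, with the parameters ($d$, $N$, group order $m$) inherited from the protocol.

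The second step is to check that Theorem~\ref{th:hiddenksum} applies with those parameters. For the quantum protocol the domain of $f$ and $t$ has $d = N^3$ elements, and the group $\Gr$ has a polynomially large order $m$ chosen (as in Section~\ref{sec:variations}) to guarantee uniqueness of the solution except with vanishing probability, in particular $m \ge N^{4k+1}$; since $4k+1 > k + \tfrac{2}{k+1}$ for every integer $k\ge 1$, we have $m = \omega\big(N^{k+2/(k+1)}\big)$, so the side condition of Theorem~\ref{th:hiddenksum} is met. The theorem then gives
\[
\frac{T}{\epsN_N} \;=\; \Omega\!\left(\sqrt{d/N - 1}\; N^{k/(k+1)}\right) \;=\; \Omega\!\left(\sqrt{N^{2}-1}\; N^{k/(k+1)}\right) \;=\; \Omega\!\left(N^{1+\frac{k}{k+1}}\right),
\]
whence $\epsN_N = O\!\big(T / N^{1+k/(k+1)}\big) = o(1)$ by the hypothesis on $T$. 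This is precisely the statement that the eavesdropper fails to recover the key, except with vanishing probability.

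All the substance lies elsewhere, so the deduction above is short; the only points requiring care are (i) reading off that the distribution produced by the reduction of Ref.~\cite{bhkkls:arxivv2} is indeed the uniform distribution on positive instances demanded by Theorem~\ref{th:hiddenksum}, and (ii) confirming the group-order side condition for the concrete parameters of the quantum protocol --- both routine. The genuine obstacle is Theorem~\ref{th:hiddenksum} itself, which rests on the machinery of the remaining, more technical sections: the new average-case adversary lower-bound method (Section~\ref{sec:avgadv}), its application to \ksum{} giving the average-case hardness bound (Theorem~\ref{thm:1average} in Section~\ref{sec:ksum}), and the average-case composition theorem culminating in Theorem~\ref{thm:bucket} (Section~\ref{sec:composition}); these are exactly the places where the flawed argument of Ref.~\cite{bhkkls:arxivv2} for $k>2$ is replaced. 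The companion statement Theorem~\ref{main-class-thm} follows by the same three steps with $d = N^2$ in place of $N^3$, for which $\sqrt{d/N-1} = \sqrt{N-1} = \Theta(N^{1/2})$ and the bound becomes $\Omega\big(N^{1/2 + k/(k+1)}\big)$, the group-order condition being unchanged.
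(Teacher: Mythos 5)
Your proposal is correct and follows exactly the paper's own route: the paper derives Theorem~\ref{main-quant-thm} (like Theorem~\ref{main-class-thm}) by citing the reduction from Ref.~\cite{bhkkls:arxivv2} that converts an eavesdropper succeeding with probability $\epsN$ in $T$ queries into a solver for the hidden \ksum{} problem, and then applying Theorem~\ref{th:hiddenksum} with $d=N^3$ so that $T/\epsN=\Omega\big(\sqrt{N^2-1}\,N^{k/(k+1)}\big)=\Omega\big(N^{1+k/(k+1)}\big)$. Your additional checks (the group order $N^{4k+1}=\omega\big(N^{k+2/(k+1)}\big)$ and the uniform-on-positive-instances distribution) are exactly the routine points left implicit in the paper.
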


\noindent
Furthermore, we showed in Ref.~\cite{bhkkls:arxivv2} that these bounds are tight.

\section{Average-Case Quantum Adversary Lower Bound Method}
\label{sec:avgadv} 

We generalize the adversary lower bound method to handle average-case complexity.
A similar bound from \blue{Ref.}~\cite{belovs:variations} already gives a lower bound technique on 
average-case query complexity,
but it cannot be applied directly here, as we explain below.

We use the following complexity measure, closely related to the adversary bound~\cite{amb02,hoyer:advNegative}.
We~give a formulation tailored to the following problem. Given two distributions $\cP$ and $\cQ$, and
an algorithm that attempts to distinguish between them, we consider the number of queries
this algorithm must make in order to succeed. The algorithm is given one input, and
accepts if it thinks the 
sample it is given comes from $\cP$ and rejects otherwise. 
The measure of success is given by the probabilities $s_\cP$ and $s_\cQ$,
which are the probability of accepting when the algorithm is given samples from $\cP$ and $\cQ$, respectively.  

\begin{definition}
\label{def:adv}
Let $\cP$ and $\cQ$ be two probability distributions on $\cD$, 
and $p_x$ and $q_y$ denote probabilities of $x$ and $y$ in $\cP$ and $\cQ$, respectively.
Let $s_\cP, s_\cQ$ be real numbers in $[0,1]$ (representing the acceptance probability on 
distributions $\cP$ and $\cQ$, respectively).
For a given matrix $\Gamma$, define the adversary bound with respect to $\Gamma, \cP, s_\cP,\cQ, s_\cQ$ as
\begin{equation}
\label{eqn:adv}
\Adv(\Gamma;\cP,s_\cP;\cQ,s_\cQ) = \Omega\s[ \min_{j\in[n]} \frac{\delta_\cP^*\Gamma\delta_\cQ\* -  \tau(s_\cP,s_\cQ) \|\Gamma\|} {\|\Gamma\hadamard \Delta_j \|} ].
\end{equation}
Here, $\hadamard $ denotes entrywise (or~Hadamard) product, and
$\|A\|$ denotes the spectral norm of $A$ (which is equal to its largest singular value).
The vectors $\delta_\cP[x] = \sqrt{p_x}$ and $\delta_\cQ[y]=\sqrt{q_y}$ are unit vectors in $\bR^\cD$;
for $j\in[n]$, the $|\cD|\times |\cD|$ matrix  $\Delta_j$ is defined by $\Delta_j[x,y] = 1_{x_j\ne y_j}$; and
\begin{equation}
\label{eqn:dpq}
\tau(s_\cP,s_\cQ) = \sqrt{\strut s_\cP s_\cQ} + \sqrt{\strut (1-s_\cP)(1-s_\cQ)}.
\end{equation}
\end{definition}

\begin{theorem}
\label{thm:adv}
Assume $\cA$ is a quantum algorithm that makes $T$ queries to the 
input string $x=(x_1,\dots,x_n)\in\cD$, and then either accepts or rejects.
Let $\cP$ and $\cQ$ be two probability distributions on $\cD$. 
Let $s_\cP$ and $s_\cQ$ be acceptance probability of $\cA$ when $x$ is sampled from $\cP$ and $\cQ$, respectively.  Then,
\begin{equation*}
T \geq \Adv(\Gamma; \cP, s_\cP; \cQ, s_\cQ),
\end{equation*}
for any $|\cD|\times|\cD|$ matrix $\Gamma$.
\end{theorem}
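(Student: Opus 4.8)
The plan is to mimic the standard progress-measure argument behind the quantum adversary method, but to track the inner product between the two \emph{superpositions} of algorithm states that arise when the input is drawn from $\cP$ versus from $\cQ$, weighted by the square-root amplitudes $\delta_\cP$ and $\delta_\cQ$. Concretely, fix the matrix $\Gamma$. Let $\ketA|\psi^t_x>$ be the state of $\cA$ after $t$ queries on a fixed input $x\in\cD$ (including the input register in the usual way, so that queries act as standard oracle unitaries and all other steps are input-independent). Define the progress quantity
\begin{equation*}
W^t = \sum_{x,y\in\cD} \Gamma[x,y]\,\delta_\cP[x]\,\delta_\cQ[y]\,\ipA<\psi^t_x,\psi^t_y>.
\end{equation*}
I would analyze $W^t$ at three points: its initial value $W^0$, the amount it can change per query, and its final value $W^T$.

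\emph{Initial value.} Before any query, $\ketA|\psi^0_x>$ is independent of $x$, so $\ipA<\psi^0_x,\psi^0_y>=1$ for all $x,y$, and hence $W^0 = \delta_\cP^*\,\Gamma\,\delta_\cQ^{\phantom{*}}$. This is exactly the numerator's first term.

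\emph{Change per query.} This is the heart of the argument and the step I expect to be the main obstacle. A single oracle query changes $\ipA<\psi^t_x,\psi^t_y>$ only through the coordinates $j$ on which $x_j\neq y_j$; writing the query unitary in the standard phase-oracle form and using Cauchy--Schwarz across the query register, one bounds $\abs|W^{t+1}-W^t|$ by a sum of terms of the form $\bigl|\sum_{x,y:\,x_j\ne y_j}\Gamma[x,y]\,\delta_\cP[x]\,\delta_\cQ[y]\,(\text{overlap of the }j\text{-components})\bigr|$, which in turn is at most $\normO|\Gamma\hadamard\Delta_j|$ times the product of the relevant sub-state norms. Summing over $j$ and over the two inputs and using that the total amplitude is one, the per-query change is at most $2\max_{j}\normO|\Gamma\hadamard\Delta_j|$ (up to the constant absorbed into the $\Omega(\cdot)$). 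Getting the bookkeeping of these norms right — in particular keeping the bound in terms of $\min_j$ rather than $\max_j$ by the freedom to rescale, and tracking that the $\delta$'s are unit vectors so no extra distribution-dependent factor appears — is the delicate part, and is precisely where this average-case version departs from the worst-case adversary method; it is close in spirit to \cite{belovs:variations} but with the weights $\delta_\cP,\delta_\cQ$ playing the role of the eigenvector of $\Gamma$.

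\emph{Final value.} After $T$ queries, $\cA$ measures and accepts with probability $s_\cP$ on $\cP$-inputs and $s_\cQ$ on $\cQ$-inputs. A standard argument (e.g.\ the output condition of the adversary method, or directly Cauchy--Schwarz on the accept/reject projectors) gives $\abs|\ipA<\psi^T_x,\psi^T_y>| \le \tau(s_\cP,s_\cQ)$ whenever $x$ is a ``yes''-type input for $\cP$ and $y$ for $\cQ$ — more carefully, averaging the overlap bound against $\delta_\cP[x]^2$ and $\delta_\cQ[y]^2$ and using \eqref{eqn:dpq} yields $\abs|W^T| \le \tau(s_\cP,s_\cQ)\,\normO|\Gamma|$, since $\Gamma$ acts on unit vectors. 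Combining the three facts, $\delta_\cP^*\Gamma\delta_\cQ^{\phantom{*}} - \tau(s_\cP,s_\cQ)\normO|\Gamma| \le \abs|W^0| - \abs|W^T| \le \abs|W^0 - W^T| \le 2T\min_j\normO|\Gamma\hadamard\Delta_j|$ (after rescaling $\Gamma$ column/row blocks to replace $\max_j$ by $\min_j$), which rearranges to $T \ge \Adv(\Gamma;\cP,s_\cP;\cQ,s_\cQ)$ as defined in \eqref{eqn:adv}. The main obstacle, as noted, is the per-query bound; the initial and final steps are routine once the progress measure is set up.
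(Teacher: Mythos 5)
Your proposal is correct and follows essentially the same route as the paper's proof: the identical progress function $W^{(t)}=\sum_{x,y}\sqrt{p_xq_y}\,\Gamma[x,y]\ipA<\psi^{(t)}_x,\psi^{(t)}_y>$, with initial value $\delta_\cP^*\Gamma\delta_\cQ$, final bound $\tau(s_\cP,s_\cQ)\norm|\Gamma|$ obtained from the accept/reject projectors via a Cauchy--Schwarz/spectral-norm estimate, and a per-query change of at most $2\max_{j}\norm|\Gamma\hadamard\Delta_j|$ obtained by decomposing the states along the query register. The only wrinkle is your remark about ``rescaling'' to replace $\max_j$ by $\min_j$: no such step is needed (nor would the inequality $|W^{(t)}-W^{(t+1)}|\le 2\min_j\norm|\Gamma\hadamard\Delta_j|$ be valid), because the numerator in Eq.~\ref{eqn:adv} does not depend on $j$, so the $\min_j$ taken over the whole ratio equals the numerator divided by $\max_j\norm|\Gamma\hadamard\Delta_j|$, and your $\max_j$ per-query bound already yields $T\ge\Adv(\Gamma;\cP,s_\cP;\cQ,s_\cQ)$.
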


If $\cP$ and $\cQ$ have partial supports, then we may use a matrix $\Gamma$ whose rows
are indexed by elements in the support of $\cP$ and columns by elements of the support of $\cQ$.
In that case we can extend the matrix $\Gamma$ by adding all-0 rows and columns. Notice that this does not alter the value of $\Adv$.

First let us consider why we need two distributions $\cP, \cQ$ on the inputs 
(and why we cannot use existing techniques such as
\blue{Theorem~33 from Ref.~\cite{belovs:variations}} for
decision problems, where $\cP=\cQ$). The distribution we care about is the uniform distribution over the 
positive instances. Under this distribution, the decision problem is of course trivial.
Using this distribution as both $\cP$ and $\cQ$ as in \blue{Ref.}~\cite{belovs:variations} would give a trivial bound.

Instead, Theorem~\ref{thm:adv} gives a lower bound on the query complexity of
an algorithm that \mbox{attempts} to distinguish between two distributions~$\cP$ and $\cQ$.
Taking $\cP$ as the uniform distribution over positive instances, and $\cQ$ as the uniform 
distribution
over all instances implies a lower bound for the {\em search} problem of 
finding $k$ elements that sum to $w$ with the promise that the instance is positive,
by the following argument. Assume an algorithm solves the search problem with
$T$ queries with non-vanishing probability. Then we can transform this algorithm into 
a distinguishing algorithm with one-sided error: if the algorithm outputs a candidate solution
$a_1,\ldots,a_k$, make $k$ additional queries and check that they sum to $w$.
If they do, accept, else reject. Then the acceptance probability on 
negative instances is 0. Since most instances are negative, the acceptance 
probability on the
uniform distribution is close to~0. We~\blue{are} interested in the acceptance 
probability on the positive instances, as a function of the number of queries~$T$.

We now proceed to the proof of Theorem~\ref{thm:adv}.
Our proof is closely related the proof of the worst-case negative-weighted adversary bound
from Ref.~\cite{hoyer:advNegative}.  We follow a slightly simplified version of the proof
from Ref.~\cite{belovs:phd}.
As usual, we introduce a progress function, show that initially, the progress function is large (Claim~\ref{clm:adv1}), at the end, it is small (Claim~\ref{clm:adv2}), and that at each step, the 
decrease is bounded (Claim~\ref{clm:adv3}).

\begin{proof}[Proof of Theorem~\ref{thm:adv}]
Recall that a quantum query algorithm is given by the following sequence of operations
\[
U_0\to O_x\to U_1\to O_x\to U_2\to  \dots \to U_{T-1} \to O_x\to U_T,
\]
where $O_x$ denotes  the input oracle, and the $U_i$s are arbitrary 
unitary transformations.
The operator $O_x$ is defined by $O_x \ket|a>\ket|i>= \ket |a+x_i>\ket|i>$
which can be decomposed as
\begin{equation}
\label{eqn:Oz}
O_x = \bigoplus_{j=0}^n O_{x_j},
\end{equation}
where for $b\in\Gr_m$, 
$O_b\colon \ket|a>\ket|i> \mapsto \ket |a+b>\ket|i>$. 
The addition in the first register is the group operation of $\Gr_m$.

For an integer $t$ between 0 and $T$, and $x\in\cD$, let
\begin{equation}
\label{eqn:advpsit}
\psi^{(t)}_x = U_tO_x U_{t-1} O_x\cdots U_1 O_x U_0 \ket |0>.
\end{equation}
be the state of the algorithm on the input $x$ after $t$ queries.
We define the quantity called the \emph{progress function} as follows
\begin{equation}
\label{eqn:progress}
W^{(t)} = \sum_{x,y\in\cD} \sqrt{p_x q_y}\;\Gamma[x,y] \ipA<\psi^{(t)}_x, \psi^{(t)}_y>.
\end{equation}

The proof is split into three parts: proving that $W^{(0)}$ is large, and that both $W^{(T)}$ and  $W^{(t)}-W^{(t+1)}$ are small.

\samepage{
\begin{clm}
\label{clm:adv1}
$W^{(0)}=\delta_\cP^* \Gamma \delta_\cQ\*$.
\end{clm}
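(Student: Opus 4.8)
The plan is to exploit the elementary fact that, before any query to the oracle is made, the algorithm's state is the same for every input, so all the overlaps appearing in the progress function are trivial. Concretely, setting $t=0$ in \eqref{eqn:advpsit} gives $\psi^{(0)}_x = U_0\ket|0>$, which does not depend on $x$ at all. First I would note that, since $U_0$ is unitary, $\psi^{(0)}_x$ is a unit vector, and since moreover $\psi^{(0)}_x=\psi^{(0)}_y$ for all $x,y$, we get $\ipA<\psi^{(0)}_x,\psi^{(0)}_y>=1$ for every pair $(x,y)\in\cD\times\cD$.

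Substituting this into the definition \eqref{eqn:progress} of the progress function at $t=0$ collapses it to
\[
W^{(0)} = \sum_{x,y\in\cD} \sqrt{p_x\, q_y}\;\Gamma[x,y]
= \sum_{x,y\in\cD} \sqrt{p_x}\,\Gamma[x,y]\,\sqrt{q_y}.
\]
Recalling from Definition~\ref{def:adv} that $\delta_\cP[x]=\sqrt{p_x}$ and $\delta_\cQ[y]=\sqrt{q_y}$, and that both are real vectors (so $\delta_\cP^*=\delta_\cP^{\top}$), the right-hand side is exactly the bilinear form $\delta_\cP^*\Gamma\delta_\cQ$, which is the claimed identity.

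There is essentially no obstacle here; the only point worth a remark is the case where $\cP$ and $\cQ$ have partial supports, but as observed right after Theorem~\ref{thm:adv}, padding $\Gamma$ with all-zero rows and columns --- equivalently, restricting the two sums to the supports of $\cP$ and $\cQ$ --- changes neither side, so the one-line computation above goes through unchanged.
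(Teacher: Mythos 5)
Your argument is correct and coincides with the paper's proof: since $\psi^{(0)}_x = U_0\ket|0>$ is independent of $x$, all inner products equal $1$, and the progress function at $t=0$ collapses to $\sum_{x,y}\sqrt{p_xq_y}\,\Gamma[x,y]=\delta_\cP^*\Gamma\delta_\cQ\*$. The remark about partial supports is a harmless addition; nothing further is needed.
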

\begin{proof}
We have $\psi^{(0)}_x = U_0\ket |0>$ no matter what $x$ is.  Hence, $\ipA<\psi^{(0)}_x, \psi^{(0)}_y>=1$ for all $x, y\in\cD$.  Plugging this into \blue{Eq.~\ref{eqn:progress}} gives
\[
W^{(0)} = \sum_{x,y\in\cD} \sqrt{p_xq_y}\; \Gamma[x,y]= \delta_\cP^* \Gamma \delta_\cQ\*. \] 
\par\mbox{}\\[-8ex]
\end{proof}
}

\vspace{1ex}

Before we proceed, we need a simple result from linear algebra.
\begin{lemma}
\label{lem:simple}
Let $A$ be $n\times n$ matrix, and $U$ and $V$ be $m\times n$ matrices with columns $\{u_i\}_{i\in[n]}$ and $\{v_i\}_{i\in[n]}$, respectively.  Then,
\[
\absB|\sum_{i,j\in[n]} A[i,j]\ip<u_i, v_j>| \le \norm|A| \normFrob|U| \normFrob|V|.
\]
\end{lemma}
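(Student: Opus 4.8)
The plan is to prove Lemma~\ref{lem:simple} by writing the double sum as a trace (or equivalently as an inner product of matrices under the Hilbert--Schmidt / Frobenius inner product) and then applying a standard matrix inequality. First I would observe that
\[
\sum_{i,j\in[n]} A[i,j]\ip<u_i, v_j> = \sum_{i,j\in[n]} A[i,j]\, (V^* U)[j,i] = \tr(A\, V^* U),
\]
where $V^*$ denotes the conjugate transpose of $V$, so that $(V^*U)[j,i]=\ip<v_j,u_i>$ in the appropriate convention. (If the paper's inner-product convention is conjugate-linear in the first argument, one takes $\ip<u_i,v_j>=(U^*V)[i,j]$ instead and writes the sum as $\tr(A^{\!\top} U^* V)$ or $\tr(V^*UA)$; the argument is identical up to relabelling, and one should pick whichever matches the convention fixed earlier in the preliminaries.) The point is simply that the double sum is a single bilinear pairing between $A$ and the Gram-type matrix built from the columns of $U$ and $V$.

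Next I would bound $|\tr(AV^*U)|$. The cleanest route is to cite the general inequality $|\tr(BC)| \le \normFrob|B|\,\normFrob|C|$ (Cauchy--Schwarz for the Frobenius inner product) together with submultiplicativity of the Frobenius norm under the spectral norm, $\normFrob|AM| \le \norm|A|\,\normFrob|M|$, applied to $B = A$, $C = V^*U$, $M=U$: namely $|\tr(AV^*U)| \le \normFrob|AV^*|\,\normFrob|U|$? — more directly, group as $|\tr(AV^*U)| = |\tr((AV^*)U)| \le \normFrob|AV^*|\,\normFrob|U| \le \norm|A|\,\normFrob|V^*|\,\normFrob|U| = \norm|A|\,\normFrob|V|\,\normFrob|U|$, using $\normFrob|AV^*|\le\norm|A|\normFrob|V^*|$ and $\normFrob|V^*|=\normFrob|V|$. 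That gives exactly the claimed bound. Alternatively, if one prefers a self-contained singular-value-decomposition argument, write $A = \sum_k \sigma_k \ket|e_k>\!\ipA<f_k|$ with $\sigma_k\le\norm|A|$, so that $\sum_{i,j}A[i,j]\ip<u_i,v_j> = \sum_k \sigma_k \ip<U^*e_k, V^* f_k>$? — each term is $\ip<(\sum_i \overline{(e_k)_i} u_i),(\sum_j \overline{(f_k)_j} v_j)>$, bound it by Cauchy--Schwarz by $\normA|\sum_i \overline{(e_k)_i}u_i|\cdot\normA|\sum_j \overline{(f_k)_j}v_j|$, then sum over $k$ with a final Cauchy--Schwarz in $k$ and use $\sum_k \normA|\sum_i\overline{(e_k)_i}u_i|^2 = \normFrob|U|^2$ (since $\{e_k\}$ extends to an orthonormal basis), likewise for $V$.

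I expect no genuine obstacle here — this is a routine linear-algebra fact, essentially the statement that the Frobenius inner product pairs nicely with the spectral norm. The only point requiring care is bookkeeping: getting the conjugates and transposes consistent with the inner-product convention fixed in Section~\ref{sec:prelim}, and being slightly careful that $U,V$ are $m\times n$ (so $U^*U$, $V^*V$, $V^*U$ are $n\times n$ and compatible with the $n\times n$ matrix $A$). I would therefore write the proof in the trace form, state the two cited norm inequalities ($|\tr(BC)|\le\normFrob|B|\normFrob|C|$ and $\normFrob|AM|\le\norm|A|\normFrob|M|$) as the two ingredients, and note that $\normFrob|\cdot|$ is invariant under conjugate transpose. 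The proof should be three or four lines.
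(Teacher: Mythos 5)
Your proposal is correct and follows essentially the same route as the paper: rewrite the double sum as a Frobenius-type pairing (the paper writes it as $\ipA<A, U^*V> = \ipA<UA, V>$, you as a trace $\tr(AV^*U)$), then apply Cauchy--Schwarz for the Frobenius inner product together with the bound $\normFrob|AM| \le \norm|A|\,\normFrob|M|$. The only differences are cosmetic (grouping the product on the $V$ side rather than the $U$ side, and the optional SVD variant), so no further comment is needed.
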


\begin{proof}
Using the Cauchy-Schwarz inequality, and the definition of the spectral norm:
\[
\absB|\sum_{i,j\in[n]} A[i,j]\ip<u_i,v_j>| =
|\ip<A,U^*V>| = |\ip<U A,V>| \le \normFrob|U A|\normFrob|V| \le \norm|A| \normFrob|U| \normFrob|V|.
\]
\par\mbox{}\\[-8ex]
\end{proof}

\vspace{1ex}

\begin{clm}
\label{clm:adv2}
$W^{(T)}\le \sB[\sqrt{\strut s_\cP s_\cQ} + \sqrt{\strut (1-s_\cP)(1-s_\cQ)}\;] \norm|\Gamma|$.
\end{clm}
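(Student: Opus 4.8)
The plan is to use the structure of the algorithm's last step. After the final unitary $U_T$, $\cA$ performs a fixed two-outcome projective measurement $\{\Pi_{\mathrm a},\Pi_{\mathrm r}\}$ on its workspace, with $\Pi_{\mathrm a}+\Pi_{\mathrm r}=\id$ and outcome $\mathrm a$ meaning ``accept''. Since each $\psi^{(T)}_x$ is a unit vector, the acceptance probabilities are $s_\cP=\sum_{x\in\cD}p_x\,\norm|\Pi_{\mathrm a}\psi^{(T)}_x|^2$ and $s_\cQ=\sum_{y\in\cD}q_y\,\norm|\Pi_{\mathrm a}\psi^{(T)}_y|^2$, whence also $1-s_\cP=\sum_{x\in\cD}p_x\,\norm|\Pi_{\mathrm r}\psi^{(T)}_x|^2$ and $1-s_\cQ=\sum_{y\in\cD}q_y\,\norm|\Pi_{\mathrm r}\psi^{(T)}_y|^2$.

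First I would split the inner product in the progress function of Eq.~\ref{eqn:progress} along this measurement. As $\Pi_{\mathrm a},\Pi_{\mathrm r}$ are complementary orthogonal projectors, $\ipA<\psi^{(T)}_x,\psi^{(T)}_y>=\ipA<\Pi_{\mathrm a}\psi^{(T)}_x,\Pi_{\mathrm a}\psi^{(T)}_y>+\ipA<\Pi_{\mathrm r}\psi^{(T)}_x,\Pi_{\mathrm r}\psi^{(T)}_y>$, so substituting into Eq.~\ref{eqn:progress} writes $W^{(T)}=W_{\mathrm a}+W_{\mathrm r}$ with $W_{\mathrm a}=\sum_{x,y\in\cD}\sqrt{p_xq_y}\,\Gamma[x,y]\,\ipA<\Pi_{\mathrm a}\psi^{(T)}_x,\Pi_{\mathrm a}\psi^{(T)}_y>$ and $W_{\mathrm r}$ the analogous sum with $\Pi_{\mathrm r}$.

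The heart of the argument is then Lemma~\ref{lem:simple}, applied to each of $W_{\mathrm a}$ and $W_{\mathrm r}$ with $A=\Gamma$. For $W_{\mathrm a}$, take $U$ to be the matrix whose column indexed by $x$ is $\sqrt{p_x}\,\Pi_{\mathrm a}\psi^{(T)}_x$ and $V$ the matrix whose column indexed by $y$ is $\sqrt{q_y}\,\Pi_{\mathrm a}\psi^{(T)}_y$; then $\normFrob|U|^2=\sum_{x\in\cD}p_x\,\norm|\Pi_{\mathrm a}\psi^{(T)}_x|^2=s_\cP$ and $\normFrob|V|^2=s_\cQ$, so the lemma gives $\absB|W_{\mathrm a}|\le\norm|\Gamma|\sqrt{s_\cP s_\cQ}$. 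The identical computation with $\Pi_{\mathrm r}$ replacing $\Pi_{\mathrm a}$ yields $\absB|W_{\mathrm r}|\le\norm|\Gamma|\sqrt{(1-s_\cP)(1-s_\cQ)}$. Adding and using the triangle inequality, $\absB|W^{(T)}|\le\absB|W_{\mathrm a}|+\absB|W_{\mathrm r}|\le\sB[\sqrt{\strut s_\cP s_\cQ}+\sqrt{\strut (1-s_\cP)(1-s_\cQ)}\,]\norm|\Gamma|=\tau(s_\cP,s_\cQ)\,\norm|\Gamma|$, which is exactly the stated bound, with $\tau$ as in Eq.~\ref{eqn:dpq}.

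I do not expect a real obstacle: the proof is short once Lemma~\ref{lem:simple} is available, which is precisely why that lemma is interposed just before. The only care needed is bookkeeping. First, $W^{(T)}$ may be genuinely complex here, since $\Gamma$ need not be symmetric and $\cP\neq\cQ$; one therefore bounds $\absB|W^{(T)}|$, which is in any case what is used downstream in combination with Claim~\ref{clm:adv1} and the per-step bound. Second, the Frobenius norms of the ``accepting'' and ``rejecting'' families of columns must split off \emph{exactly} as $s_\cP,s_\cQ$ and $1-s_\cP,1-s_\cQ$ — this is where unitarity ($\norm|\psi^{(T)}_x|=1$) and $\Pi_{\mathrm a}+\Pi_{\mathrm r}=\id$ enter, and it is what makes the error term in the adversary bound come out as the clean expression $\tau(s_\cP,s_\cQ)\,\norm|\Gamma|$ rather than something coarser.
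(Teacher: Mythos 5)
Your proof is correct and takes essentially the same route as the paper: split $W^{(T)}$ along the final two-outcome measurement, apply Lemma~\ref{lem:simple} to each piece with columns $\sqrt{p_x}\,\Pi\psi^{(T)}_x$ and $\sqrt{q_y}\,\Pi\psi^{(T)}_y$ so the Frobenius norms evaluate to $s_\cP,s_\cQ$ and $1-s_\cP,1-s_\cQ$, and add. Bounding $|W^{(T)}|$ rather than $W^{(T)}$ is a harmless (slightly stronger) variant of what the claim states.
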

\begin{proof}
Denote for brevity $\psi_x = \psi^{(T)}_x$.
Also, let $\sfigA{\Pi_0, \Pi_1}$ be the final measurement of the query algorithm $\cA$.
We have
\begin{equation}
\label{eqn:Wt}
W^{(T)} = 
\sum_{x,y\in\cD} \sqrt{p_xq_y}\;\Gamma[x,y] \ipA<\Pi_0\psi_x, \Pi_0\psi_y>
+ \sum_{x,y\in\cD} \sqrt{p_xq_y}\;\Gamma[x,y] \ipA<\Pi_1\psi_x, \Pi_1\psi_y>.
\end{equation}

Let us estimate the first term of Eq.~\ref{eqn:Wt}.
Denote by $U$ and $V$ the matrices having $u_x = \sqrt{p_x} \psi_x$ and $v_y = \sqrt{q_y} \psi_y$ as their columns, respectively.  Then, by Lemma~\ref{lem:simple}, the first term of
\blue{Eq.~\ref{eqn:Wt}} is at most $\|\Gamma\| \normFrob|U|\normFrob|V|$, where
\[
\normFrob|U|^2 = \sum_{x\in\cD}p_x \|\Pi_0\psi_x\|^2 = 1-s_\cP,
\qquad\text{and}\qquad
\normFrob|V|^2 = \sum_{y\in\cD}q_y \|\Pi_0\psi_y\|^2 = 1-s_\cQ.
\]
Hence, the first term of Eq.~\ref{eqn:Wt} is at most $\sqrt{(1-s_\cP)(1-s_\cQ)}\norm|\Gamma|$.
Similarly, the second term is at most $\sqrt{s_\cP s_\cQ} \norm|\Gamma|$.
Adding them up, we get the required inequality. 
\end{proof}

\vspace{1ex}

\begin{clm}
\label{clm:adv3}
$|W^{(t)}-W^{(t+1)}|\le 2\max_{j\in[n]} \norm|\Gamma\hadamard \Delta_j|$.
\end{clm}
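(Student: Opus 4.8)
The plan is to run the standard progress-function bookkeeping of the negative-weight adversary method, but being careful with constants so as to get exactly the factor~\(2\). Abbreviate \(\psi_x:=\psi^{(t)}_x\). The step from time \(t\) to \(t+1\) is \(\psi^{(t+1)}_x=U_{t+1}O_x\psi_x\), and since \(U_{t+1}\) is unitary and does not depend on the input, \(\ipA<\psi^{(t+1)}_x,\psi^{(t+1)}_y>=\ipA<O_x\psi_x,O_y\psi_y>\); hence, by Eq.~\ref{eqn:progress},
\[
W^{(t)}-W^{(t+1)}=\sum_{x,y\in\cD}\sqrt{p_xq_y}\;\Gamma[x,y]\bigl(\ipA<\psi_x,\psi_y>-\ipA<O_x\psi_x,O_y\psi_y>\bigr).
\]
I would then decompose each state along the query-index register: let \(\psi_{x,j}\) be the component of \(\psi_x\) on which the index register holds the value \(j\), so the \(\psi_{x,j}\) are mutually orthogonal and \(\sum_j\|\psi_{x,j}\|^2=\|\psi_x\|^2=1\). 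By the block form~(Eq.~\ref{eqn:Oz}) of the oracle, \(\ipA<\psi_x,\psi_y>=\sum_j\ipA<\psi_{x,j},\psi_{y,j}>\) and \(\ipA<O_x\psi_x,O_y\psi_y>=\sum_j\ipA<O_{x_j}\psi_{x,j},O_{y_j}\psi_{y,j}>\). Whenever \(x_j=y_j\) the operators \(O_{x_j}\) and \(O_{y_j}\) coincide and, being unitary, the corresponding terms cancel in the difference; so only indices \(j\) with \(x_j\ne y_j\) survive — which is exactly the information recorded by the matrix \(\Delta_j\). Collecting terms,
\[
W^{(t)}-W^{(t+1)}=A_1-A_2,\qquad A_r=\sum_j\sum_{x,y\in\cD}\sqrt{p_xq_y}\;(\Gamma\hadamard\Delta_j)[x,y]\,c^{(r)}_{x,y,j},
\]
with \(c^{(1)}_{x,y,j}=\ipA<\psi_{x,j},\psi_{y,j}>\) and \(c^{(2)}_{x,y,j}=\ipA<O_{x_j}\psi_{x,j},O_{y_j}\psi_{y,j}>\).

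It then suffices to show \(|A_1|\le\max_j\norm|\Gamma\hadamard\Delta_j|\) and the same for \(A_2\). For \(A_1\), I would fix \(j\) and apply Lemma~\ref{lem:simple} with \(A=\Gamma\hadamard\Delta_j\) and the matrices \(U_j,V_j\) whose columns are \(\sqrt{p_x}\,\psi_{x,j}\) and \(\sqrt{q_y}\,\psi_{y,j}\), respectively, obtaining a bound \(\norm|\Gamma\hadamard\Delta_j|\,\normFrob|U_j|\,\normFrob|V_j|\) with \(\normFrob|U_j|^2=\sum_xp_x\|\psi_{x,j}\|^2\) and \(\normFrob|V_j|^2=\sum_yq_y\|\psi_{y,j}\|^2\). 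Summing over \(j\), pulling out \(\max_j\norm|\Gamma\hadamard\Delta_j|\), and using Cauchy--Schwarz in the form \(\sum_j\normFrob|U_j|\,\normFrob|V_j|\le\bigl(\sum_j\normFrob|U_j|^2\bigr)^{1/2}\bigl(\sum_j\normFrob|V_j|^2\bigr)^{1/2}\), the two sums reduce to \(\sum_x p_x\sum_j\|\psi_{x,j}\|^2=\sum_xp_x=1\) and \(\sum_yq_y=1\), so \(|A_1|\le\max_j\norm|\Gamma\hadamard\Delta_j|\). The bound on \(A_2\) is word-for-word the same, because each \(O_{x_j}\) is unitary, so \(\|O_{x_j}\psi_{x,j}\|=\|\psi_{x,j}\|\) and the Frobenius-norm identities are unchanged. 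Adding the two gives \(|W^{(t)}-W^{(t+1)}|\le|A_1|+|A_2|\le 2\max_j\norm|\Gamma\hadamard\Delta_j|\), as claimed.

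The one step that is not purely mechanical — and the part I would be most careful about — is that Cauchy--Schwarz pass: it is what lets the single ``weight budget'' \(\sum_j\|\psi_{x,j}\|^2=1\) be spread across all \(n+1\) index blocks simultaneously. Bounding each of the \(n+1\) summands of \(A_1\) in isolation by \(\max_j\norm|\Gamma\hadamard\Delta_j|\) would cost a spurious factor of \(n\); merging the blocks through Cauchy--Schwarz is precisely what keeps the per-step loss at the constant~\(2\). The remaining bookkeeping is harmless: the ``no-query'' block on which \(O_x\) acts as the identity drops out automatically, since there \(x_j\) is not a variable and \(\Delta_j\equiv 0\), and the final constant \(2\) is just \(|A_1|+|A_2|\).
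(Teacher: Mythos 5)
Your proof is correct and follows essentially the same route as the paper's: the same block decomposition of the state along the query-index register, the same cancellation of the $x_j=y_j$ (and no-query) terms yielding $\Gamma\hadamard\Delta_j$, the same per-block application of Lemma~\ref{lem:simple}, and the same Cauchy--Schwarz merge across blocks using $\sum_j\norm|\psi_{x,j}|^2\le 1$ to bound each of the two sums by $\max_j\norm|\Gamma\hadamard\Delta_j|$. No gaps; your emphasis on the Cauchy--Schwarz step as the place where a spurious factor of $n$ is avoided matches exactly what the paper's proof does.
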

\begin{proof}
Denote $\psi_x = \psi^{(t)}_x$ and $\psi'_x = \psi^{(t+1)}_x$.  
The vector $\psi_x$ can be decomposed as $\bigoplus_{j= 0}^{n} \psi_{x,j}$ where the decomposition is the same as for $O_x$ in Eq.~\ref{eqn:Oz}.
If $x_j=y_j$, then the input oracle does not change the inner product between $\psi_{x,j}$ and $\psi_{y,j}$, hence, the corresponding entry of $\Gamma$ can be ignored.  Formally, for any $x,y\in\cD$, we have

\[
\ip<\psi_x, \psi_y>-\ip<\psi'_x, \psi'_y>  = 
\ip<\psi_x, \psi_y>-\ip<O_x\psi_x, O_y\psi_y> = 
\sum_{j=0}^n \chi_{x,y,j},
\]
where $\chi_{x,y,j} = \ip<\psi_{x,j} , \psi_{y,j}> - \ip<O_{x_j}\psi_{x,j}, O_{y_j}\psi_{y,j} >$.  Note that $\chi_{x,y,j} = 0$ if $x_j=y_j$.  In particular, $\chi_{x,y,0}=0$.  
Thus,
\begin{eqnarray}
\lefteqn{|W^{(t)}-W^{(t+1)}| }\notag \\
	& = & \absC|\sum_{x,y\in\cD} \sqrt{p_xq_y}\;\Gamma[x,y] \sB[\ip<\psi_x, \psi_y> - \ip<\psi'_x, \psi'_y>]| \notag\\
	& = &\absC| \sum_{x,y\in\cD}\sum_{j=0}^n\sqrt{p_xq_y}\; \Gamma[x,y] \chi_{x,y,j} |\notag \\
	& = & \absC|\sum_{j=1}^n \sum_{x,y\in\cD}\sqrt{p_xq_y}(\Gamma\hadamard \Delta_j) [x,y]\chi_{x,y,j} |\notag\\
	& \le & \sum_{j=1}^n \absC| \sum_{x,y\in\cD} \sqrt{p_xq_y}(\Gamma\hadamard \Delta_j)[x,y]\ip<\psi_{x,j} , \psi_{y,j}>  | \notag \\
	&  &\qquad +
\sum_{j=1}^n \absC| \sum_{x,y\in\cD} \sqrt{p_xq_y}(\Gamma\hadamard \Delta_j)[x,y] \ip<O_{x_j}\psi_{x,j} , O_{y_j}\psi_{y,j}> |. \label{eqn:Wtchange}
\end{eqnarray}
Let us estimate the second term, the first one being similar.  
For $j\in[n]$, let $U_j$ be the matrix with columns $u_{j,x} = \sqrt{p_x} O_{x_j}\psi_{x,j}$,
and $V_j$ be the matrix with columns $v_{j,y} = \sqrt{q_y} O_{y_j}\psi_{y,j}$.
By~Lemma~\ref{lem:simple} and the Cauchy-Schwarz inequality, the second term of
\blue{Eq.~\ref{eqn:Wtchange}} is at most
\begin{eqnarray*}
\sum_{j=1}^n \norm|\Gamma\hadamard \Delta_j| \normFrob|U_j|\normFrob|V_j| 
&\le &\max_{j\in [n]} \norm|\Gamma\hadamard \Delta_j| \sum_{j=1}^n  \normFrob|U_j|\normFrob|V_j|\\
&\le &\max_{j\in [n]} \norm|\Gamma\hadamard \Delta_j| \sqrt{\sC[\sum_{j=1}^n \normFrob|U_j|^2 ]\sC[\sum_{j=1}^n \normFrob|V_j|^2]}.
\end{eqnarray*}
Also, we have
\[
\sum_{j=1}^n  \normFrob|U_j|^2 = \sum_{j=1}^n \sum_{x\in\cD} p_x \norm|O_{x_j}\psi_{x,j}|^2 =
\sum_{x\in\cD} p_x \sum_{j=1}^n  \norm|\psi_{x,j}|^2 \le
\sum_{x\in\cD} p_x \norm|\psi_{x}|^2 = \sum_{x\in\cD} p_x = 1,
\]
and, similarly, $\sum_{j=1}^n  \normFrob|V_j|^2\le 1$.  Combining the last three inequalities, we get that the second term of Eq.~\ref{eqn:Wtchange} is at most $\max_{j\in [n]} \norm|\Gamma\hadamard \Delta_j|$.
Using the same estimate for the first term, we obtain the required inequality. 
\end{proof}

\vspace{1ex}

This concludes the proof of Theorem~\ref{thm:adv}
\end{proof}

\section{Average-Case Complexity of \boldmath{\ksum}}
\label{sec:ksum}

Recall the 
$\ksum$  problem on $n$ elements in an abelian group 
$\Gr_m$ where $m$ is the order of the group.
Let $w$ be a fixed element of $\Gr_m$.
An input $x=(x_1,\dots,x_n)$ is called \emph{positive} if there exists a $k$-subset $\blue{V} = \{t_1,\dots,t_k\}\subseteq [n]$ such that $x_{t_1}+\cdots+x_{t_k} = w$ in $\Gr_m$.
Otherwise, the input is called \emph{negative}.

Consider the following probability distribution $\cP$ on positive inputs:
\itemstart
\item Select a $k$-subset $\blue{U}$ of $[n]$ uniformly at random;
\item assign to $\blue{U}$ a uniformly random string in $\Gr_m^{|\blue{U}|}$ whose sum is $w$;
\item choose the remaining elements uniformly at random.
\itemend

\begin{theorem}
\label{thm:1average}
Assume $\cS$ is a quantum algorithm for the search problem $\ksum$ 
that makes $T$ queries and succeeds  with probability~$\epsn>0$ over  inputs sampled from the distribution $\cP$.  
Then, 
\[
\frac T\epsn = \Omega\s[n^{k/(k+1)} ],
\] 
provided that $\epsn = \omega\s[n^{-1/(k+1)}]$ and $m = \Omega\s[n^{k+\frac2{k+1}} ]$
is again the order of the underlying abelian group.

\end{theorem}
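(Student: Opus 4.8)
The plan is to derive Theorem~\ref{thm:1average} from the distinguishing bound of Theorem~\ref{thm:adv}. First I would turn the search algorithm $\cS$ into a distinguisher $\cA$ between $\cP$ (the stated sampling distribution on positive inputs) and $\cQ$, the uniform distribution on $\Gr_m^n$: run $\cS$, and whenever it returns a candidate $k$-subset $\{i_1<\dots<i_k\}$, spend $k$ extra queries to read $x_{i_1},\dots,x_{i_k}$ and accept iff $x_{i_1}+\dots+x_{i_k}=w$, otherwise reject. This $\cA$ makes $T+k=O(T)$ queries, has one-sided error, and satisfies $s_\cP\ge\epsn$ (when $\cS$ succeeds the check passes) and $s_\cQ\le\Pr_{x\sim\cQ}[x\text{ positive}]\le\binom nk/m=O(n^{-2/(k+1)})$, by a union bound over the $\binom nk$ candidate subsets together with the hypothesis $m=\Omega(n^{k+2/(k+1)})$. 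In particular $s_\cQ=o(1)$, and since $\epsn=\omega(n^{-1/(k+1)})$ we also get $\sqrt{s_\cQ}=o(\epsn)$; hence the quantity of Definition~\ref{def:adv} obeys $\tau(s_\cP,s_\cQ)=\sqrt{s_\cP s_\cQ}+\sqrt{(1-s_\cP)(1-s_\cQ)}\le\sqrt{1-\epsn}+o(\epsn)\le 1-\tfrac{\epsn}2+o(\epsn)$.

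It then suffices to exhibit one matrix $\Gamma$, with rows indexed by positive instances and columns by $\Gr_m^n$, for which $\Adv(\Gamma;\cP,s_\cP;\cQ,s_\cQ)=\Omega(\epsn\,n^{k/(k+1)})$. I would obtain $\Gamma$ by adapting the $\ksum$ adversary matrix of Belovs and \v{S}palek~\cite{spalek:kSumLower} — symmetrised over the $S_n$-action on the $n$ coordinates and extended by zeros to the full column index set — together with the average-case/filtering techniques of~\cite{belovs:onThePower}, restricting the rows to positive inputs with a unique witness; the discarded multi-witness positives carry $\cP$-mass only $O(\binom nk/m)=O(n^{-2/(k+1)})$, so this changes $\delta_\cP$ by at most $O(n^{-1/(k+1)})$ in norm, and similarly the positive inputs carry only $O(n^{-2/(k+1)})$ of the $\cQ$-mass. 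The matrix must have two properties. (a) The worst-case-style ratio $\|\Gamma\|\big/\max_j\|\Gamma\hadamard\Delta_j\|=\Omega(n^{k/(k+1)})$, which is exactly the content of the Belovs--\v{S}palek analysis and which survives symmetrisation and restriction. (b) The alignment $\delta_\cP^*\Gamma\delta_\cQ\ge\bigl(1-O(n^{-1/(k+1)})\bigr)\|\Gamma\|$, i.e.\ $\delta_\cP$ and $\delta_\cQ$ are, up to an $O(n^{-1/(k+1)})$ error, the top left and right singular vectors of $\Gamma$. Property~(b) is plausible because $\delta_\cQ$ is the normalised all-ones vector, fixed by the $S_n$-action under which $\Gamma$ was symmetrised, and $\delta_\cP$ is essentially the normalised indicator of the unique-witness positive inputs.

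Given (a), (b) and the estimate for $\tau$, we obtain
\[
\delta_\cP^*\Gamma\delta_\cQ-\tau(s_\cP,s_\cQ)\|\Gamma\|
\;\ge\;\Bigl(1-O(n^{-1/(k+1)})-1+\tfrac{\epsn}2-o(\epsn)\Bigr)\|\Gamma\|
\;=\;\Omega(\epsn)\,\|\Gamma\|,
\]
using $n^{-1/(k+1)}=o(\epsn)$. Plugging this into Theorem~\ref{thm:adv} and using (a),
\[
T+k\;\ge\;\Adv(\Gamma;\cP,s_\cP;\cQ,s_\cQ)
\;=\;\Omega\!\left(\frac{\epsn\,\|\Gamma\|}{\max_j\|\Gamma\hadamard\Delta_j\|}\right)
\;=\;\Omega\bigl(\epsn\,n^{k/(k+1)}\bigr).
\]
Since $\epsn\,n^{k/(k+1)}=\omega\bigl(n^{(k-1)/(k+1)}\bigr)\to\infty$, the additive constant $k$ is absorbed, and dividing by $\epsn$ gives $T/\epsn=\Omega(n^{k/(k+1)})$, as claimed.

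The main obstacle is step~(b): pinning down the spectrum of $\Gamma$ finely enough to certify that $\delta_\cP$ and $\delta_\cQ$ capture all but an $O(n^{-1/(k+1)})$-fraction of its action. The symmetrised $\Gamma$ block-diagonalises over the $S_n$-isotypic components, so both $\|\Gamma\|$ and the overlap $\delta_\cP^*\Gamma\delta_\cQ$ are controlled by the few low-order components indexed by how a size-$\le k$ set of coordinates can meet a witness; one must show that the remaining components contribute only a $1/n^{\Theta(1/(k+1))}$-fraction of the leading term, and it is precisely this margin that forces the hypothesis $\epsn=\omega(n^{-1/(k+1)})$ (and, through the measure of the positive inputs under $\cQ$, the hypothesis $m=\Omega(n^{k+2/(k+1)})$). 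A secondary point is to verify that the Hadamard-product norms $\|\Gamma\hadamard\Delta_j\|$ are unchanged up to constants under the symmetrisation and the unique-witness restriction, so that the $\Omega(n^{k/(k+1)})$ ratio of~\cite{spalek:kSumLower} is genuinely inherited by our $\Gamma$.
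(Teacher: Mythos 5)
Your overall route matches the paper's: convert $\cS$ into a one-sided-error distinguisher between $\cP$ and the uniform distribution $\cQ$ by verifying the returned $k$-subset with $k$ extra queries, bound $s_\cQ\le\binom{n}{k}/m$ so that $\tau(s_\cP,s_\cQ)\le 1-\epsn/2+o(\epsn)$, and feed a $\ksum$ adversary matrix into Theorem~\ref{thm:adv}. But the step you yourself label ``the main obstacle'' --- your property (b), that $\delta_\cP^*\Gamma\delta_\cQ\ge\bigl(1-O(n^{-1/(k+1)})\bigr)\|\Gamma\|$ --- is precisely the new technical content of this theorem (the paper's Claim~\ref{clm:2average}, which it notes requires ``a slightly more careful analysis'' than Ref.~\cite{spalek:kSumLower}), and your proposal does not prove it: it offers only a plausibility argument via $S_n$-symmetrisation and a promised spectral analysis of isotypic components. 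Without that bound the numerator $\delta_\cP^*\Gamma\delta_\cQ-\tau\|\Gamma\|$ in Definition~\ref{def:adv} could be negative and the whole argument yields nothing, so as written the proof is incomplete at its decisive point.

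For comparison, here is how the paper closes the gap, and where your plan diverges unnecessarily. The paper does not restrict to unique-witness positives and does not symmetrise: it uses the Belovs--\v{S}palek/Belovs--Rosmanis matrix in stacked-block form, one block $G_V$ for each $k$-subset $V$, with rows indexed by $\cP_V=\{x:\sum_{t\in V}x_t=w\}$ and columns by all of $\Gr_m^n$; the uniform distribution on these (multiply-counted) row labels is \emph{exactly} the planted distribution $\cP$ of the theorem, so multi-witness inputs need no separate treatment and your mass-correction step disappears. With $G_V=\sqrt m\sum_{U:V\not\subseteq U}\alpha_{|U|}E_U^{\cP_V,\cD}$ and $\alpha_\ell=\binom{n}{k}^{-1/2}\max\{n^{k/(k+1)}-\ell,0\}$, one has $\delta_\cQ=e_{\mathbf 0}$ and each block of $\delta_\cP$ proportional to $e_{\mathbf 0'}$ in the Fourier basis, giving the exact value $\delta_\cP^*\Gamma\delta_\cQ=n^{k/(k+1)}$ (Claim~\ref{clm:deltaGammadelta}); the sharp bound $\|\Gamma\|\le\bigl(1+O(n^{-1/(k+1)})\bigr)n^{k/(k+1)}$ (Claim~\ref{clm:normGamma}) then follows by bounding the maximal $\ell_1$-norm of a column of $\Gamma^*\Gamma$ in the Fourier basis: the diagonal entry is at most $\binom{n}{k}\alpha_{|v|}^2\le n^{2k/(k+1)}$, and the shift structure of $G_V^*G_V$ confines the off-diagonal contribution to at most $|v|\binom{n}{k-1}\,k\,\alpha_0^2=O\bigl(n^{-1/(k+1)}\bigr)n^{2k/(k+1)}$, since $\alpha_{|v|}=0$ once $|v|>n^{k/(k+1)}$. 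If you complete your argument, you should carry out this (or an equivalent) explicit computation; note also that restricting rows to unique-witness positives would force you to re-establish $\|\Gamma\hadamard\Delta_j\|=O(1)$ for the modified matrix, whereas the stacked construction inherits it directly from Refs~\cite{spalek:kSumLower,belovs:onThePower}.
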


This theorem uses the following claim.
\begin{clm}
\label{clm:2average}
Let the distribution $\cP$ be as above, and $\cQ$ be the uniform distribution on all the inputs.
There exists a matrix $\Gamma$ satisfying the following constraints:
\[
\delta^*_\cP \Gamma\delta_\cQ\* = n^{k/(k+1)},\qquad
\norm|\Gamma| \le \sB[{1+O\sA[n^{-1/(k+1)}] }] n^{k/(k+1)},
\quad\text{and}\qquad
\norm|\Gamma\hadamard \Delta_j| = O(1)
\]
in the notation of Theorem~\ref{thm:adv}.
\end{clm}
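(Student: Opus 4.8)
### Proof proposal for Claim~\ref{clm:2average}

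The plan is to build $\Gamma$ explicitly as a modification of the adversary matrix used in the worst-case $\ksum$ lower bound of Belovs and \v{S}palek~\cite{spalek:kSumLower}, then verify the three numerical constraints by Fourier analysis over $\Gr_m$. Recall that the worst-case construction expresses the adversary matrix as a product $\Gamma = \sum_{\alpha} c_\alpha\, \chi_\alpha$ of projectors onto characters of $\Gr_m^n$, grouped by Hamming weight of the character's support; the eigenvalues on each weight-$\ell$ layer are controlled, and the "dual" norm $\|\Gamma\hadamard\Delta_j\|$ is $O(1)$ because multiplying by $\Delta_j$ kills the diagonal-in-coordinate-$j$ part and only shifts weight by one. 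First I would set up this decomposition carefully in the $n$-coordinate setting, writing $\Gamma$ as a weighted sum over subsets $S\subseteq[n]$ of the "inner-product-on-$S$" matrices, with weights chosen (as in~\cite{spalek:kSumLower}) so that $\|\Gamma\|$ and each $\|\Gamma\hadamard\Delta_j\|$ come out to the claimed orders $n^{k/(k+1)}$ and $O(1)$ respectively. The scaling $m=\Omega(n^{k+2/(k+1)})$ is exactly what is needed to make the contribution of the unwanted high-weight error terms negligible, so I would track where that hypothesis enters.

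The genuinely new ingredient — and the step I expect to be the main obstacle — is computing $\delta_\cP^*\Gamma\delta_\cQ$ for the specific pair $(\cP,\cQ)$ at hand, where $\cQ$ is uniform on \emph{all} of $\Gr_m^n$ and $\cP$ is uniform on positive instances planted in a random $k$-subset. Here $\delta_\cQ$ is (proportional to) the all-ones vector, i.e. the trivial character, so $\Gamma\delta_\cQ$ picks out only the weight-$0$ Fourier component of $\Gamma$ acting on the right; thus $\delta_\cP^*\Gamma\delta_\cQ$ reduces to a single inner product between $\delta_\cP$ and a scalar multiple of the uniform vector, and the whole quantity becomes $m^{-n/2}$ times the sum of the relevant row-sums of $\Gamma$ weighted by $\sqrt{p_x}$. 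The point is that $\cP$ is \emph{not} uniform, so $\delta_\cP$ has nontrivial Fourier weight precisely on characters supported inside the planted $k$-set; I would expand $\delta_\cP$ in the character basis, note that its mass on a weight-$\ell$ character supported on a $k$-subset scales like $(\text{combinatorial factor})\cdot m^{-(n-\ell)/2}\cdot(\text{correction})$, and match this against the corresponding layer of $\Gamma$. The bilinear form then telescopes to a single dominant term of size $n^{k/(k+1)}$ (after normalising the overall constant of $\Gamma$ to make this exactly $n^{k/(k+1)}$), with all other terms suppressed by powers of $n^{-1/(k+1)}$ and $m^{-1}$ under the stated hypothesis on $m$.

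Finally I would assemble the three estimates: the bound $\|\Gamma\|\le(1+O(n^{-1/(k+1)}))\,n^{k/(k+1)}$ follows from the layer-wise eigenvalue bounds of~\cite{spalek:kSumLower} together with the observation that the leading layer contributes exactly $n^{k/(k+1)}$ after normalisation and the correction layers contribute the $O(n^{-1/(k+1)})$ relative error; the bound $\|\Gamma\hadamard\Delta_j\|=O(1)$ is the direct analogue of the dual-norm bound in the worst-case proof, using that $\Delta_j$ raises character weight by exactly one and the weight profile of $\Gamma$ was chosen with ratios tuned to make consecutive layers balance; and the value of $\delta_\cP^*\Gamma\delta_\cQ$ is the computation above. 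Plugging these into Theorem~\ref{thm:adv} via Definition~\ref{def:adv} then yields the $\Omega(n^{k/(k+1)})$ bound of Theorem~\ref{thm:1average}, where the hypothesis $\epsn=\omega(n^{-1/(k+1)})$ is what guarantees that the $\tau(s_\cP,s_\cQ)\|\Gamma\|$ subtraction (with $s_\cQ$ close to $0$ because most instances are negative) does not swallow the $\delta_\cP^*\Gamma\delta_\cQ$ term. I expect the bookkeeping in the Fourier expansion of $\delta_\cP$ — keeping precise track of which combinatorial factors are $\Theta(1)$ versus which carry the $n^{k/(k+1)}$ and which are genuinely lower-order — to be where essentially all the work lies.
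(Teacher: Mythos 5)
Your overall route coincides with the paper's: take the Belovs--\v{S}palek adversary matrix for \ksum{} (a weighted sum of the character projectors $E_U$ over subsets $U\subseteq[n]$, with linearly decaying weights cut off at $n^{k/(k+1)}$, stacked over all choices of the planted $k$-set $V$), import $\|\Gamma\circ\Delta_j\|=O(1)$ from the worst-case analysis, and evaluate $\delta_{\mathcal P}^*\Gamma\delta_{\mathcal Q}$ by observing that $\delta_{\mathcal Q}$ is the trivial character so only the weight-zero component survives. (In the paper this inner product is exactly $n^{k/(k+1)}$, with no error terms to suppress, because within each block the row distribution is exactly uniform on $\{x:\sum_{j\in V}x_j=w\}$, i.e.\ again a trivial character; your anticipated lower-order corrections simply do not arise.)

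The genuine gap is the bound $\|\Gamma\|\le\bigl(1+O(n^{-1/(k+1)})\bigr)n^{k/(k+1)}$, which you assert ``follows from the layer-wise eigenvalue bounds of~\cite{spalek:kSumLower}.'' That reference contains no such upper bound---in the worst-case adversary bound $\|\Gamma\|$ sits in the numerator, so only a \emph{lower} bound on it is needed there---and a purely layer-wise argument cannot work as stated, because $\Gamma^*\Gamma=\sum_V G_V^*G_V$ is not diagonal, nor block-diagonal by Fourier weight, in the character basis: each block $G_V$ couples $e_u$ and $e_v$ whenever $v$ is obtained from $u$ by a constant shift on $V$, which changes the weight. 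Controlling these cross terms is exactly the ``more careful analysis'' the paper adds: a Gershgorin-type column-sum estimate in the Fourier basis, where the diagonal entry of the column of $e_v$ is at most $\binom{n}{k}\alpha_{|v|}^2\le n^{2k/(k+1)}$, while the off-diagonal mass is bounded by counting the at most $|v|\binom{n}{k-1}$ sets $V$ meeting $\mathrm{supp}(v)$, each contributing at most $k$ shift terms of size at most $\alpha_0^2$, which gives the $O(n^{-1/(k+1)})$ relative error. Without an argument of this kind your norm bound is unsupported, and it is the crux of the claim: the final lower bound survives only because $\|\Gamma\|$ exceeds $\delta_{\mathcal P}^*\Gamma\delta_{\mathcal Q}$ by at most a $1+O(n^{-1/(k+1)})$ factor, to be beaten by $\epsn=\omega(n^{-1/(k+1)})$. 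A minor further correction: the hypothesis on $m$ plays no role in the three matrix estimates of the claim (which is stated without any condition on $m$); in the paper it enters only in the proof of Theorem~\ref{thm:1average}, through $s_{\mathcal Q}\le\binom{n}{k}/m$, rather than by suppressing high-weight terms in the construction.
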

\begin{proof}
Our construction is the same as in Refs~\cite{spalek:kSumLower,belovs:onThePower}, 
which we sketch below.
The matrix $\Gamma$ consists of $\binom n k$ matrices $G_\blue{V}$ stacked one on another for all possible choices of $\blue{V}=\{t_1,\dots,t_k\}\subset[n]$:
\looseness=-1
\begin{equation}
\label{eqn:Gamma}
\Gamma = \left(
\begin{array}{c} G_{1,2,\dots,k} \\ G_{1,2,\dots,k-1,k+1} \\ \dots \\ G_{n-k+1,n-k+2,\dots,n} \\ \end{array}
\right).
\end{equation}
For $\blue{V} = \{t_1,\dots,t_k\}$, $G_\blue{V}$ is \blue{a} $\cP_\blue{V} \times \cD$ matrix, where $\cD = \Gr_m^n$ and $\cP_\blue{V} = \{x\in\cD \mid x_{t_1}+\cdots+x_{t_k} = w\}$.
Note that the uniform distribution on the labels of the rows and the columns of $\Gamma$ \blue{generates} the probability distributions $\cP$ and $\cQ$, respectively.

Let $E_0 = J_m/m$ be the $m\times m$ matrix with all entries $1/m$, and let $E_1 = I - E_0$.
For $\blue{U}\subseteq [n]$, define a $\cD\times\cD$ matrix
$E_\blue{U} = \bigotimes_{j\in[n]} E_{s_j}$, where $s_j=1$ if $j\in \blue{U}$ and $s_j=0$ otherwise.
The matrices $G_\blue{V}$ in Eq.~\ref{eqn:Gamma} are given by
\begin{equation}
\label{eqn:GT}
G_\blue{V} = \sqrt m \sum_{\blue{U}\subseteq [n]: \blue{V}\not\subseteq \blue{U}} \alpha_{|\blue{U}|}\; E_\blue{U}^{\cP_\blue{V}, \cD},
\end{equation}
where
\begin{equation}
\label{eqn:alpha}
\alpha_\ell = {n\choose k}^{-1/2} \max\sfigB{ n^{k/(k+1)} - \ell,\; 0 }.
\end{equation}
This finishes the definition of the matrix $\Gamma$.
From \blue{Refs}~\cite{spalek:kSumLower,belovs:onThePower}, we have the following estimate:
\begin{clm}
\label{clm:GammacircDelta}
For $\Gamma$ defined above and $j\in[n]$, we have $\norm|\Gamma\hadamard \Delta_j| = O(1)$.
\end{clm}
Thus, it remains to prove the first two statements of Claim~\ref{clm:2average}.  For that, we need a slightly more careful analysis than in \blue{Ref.}~\cite{spalek:kSumLower}.

\mycommand{supp}{\mathop{\mathrm{supp}}\nolimits}
Let $e_0,\dots,e_{m-1}$ be the Fourier basis of $\bC^{\Gr_m}$.
Recall that it is an orthonormal basis given by $e_i[j] = \frac1{\sqrt m}\omega^{ij}$, where $\omega = \ee^{2\pi\ii/m}$.
For $v=(v_1,\dots,v_n)\in \cD$, define  vector
$e_v = e_{v_1}\otimes e_{v_2}\otimes\cdots\otimes e_{v_n}$.
These vectors form the Fourier basis of \smash{$\bC^{\cD}$}.
The \emph{support} of $v$ is defined as $\supp(v) = \{ i\in[n]\mid v_i\ne 0 \}$.
The \emph{weight} of $v$ is defined as the size of the support: $|v| = |\supp(v)|$.

Let $\blue{V}=\{t_1,\dots,t_k\}\subseteq[n]$ be fixed, and denote $t=t_1$ for brevity.
We can identify the sequences in $\cP_\blue{V}$ with the sequences in $\cD' = \Gr_m^{[n]\setminus\{t\}}$, since any sequence $x$ in $\cD'$ can be uniquely extended to a sequence in $\cP_\blue{V}$ using condition $x_{t_1}+\cdots+x_{t_k} = w$.
Note that for any $v\in\cD$ and $x\in\cP_\blue{V}$, we have
\[
e_v[x] =\frac 1 {m^{n/2}} \omega^{ \sum_{j\in [n]} v_jx_j } = \frac 1 {m^{n/2}} \omega^{wv_t} \omega^{ \sum_{j\in \blue{V}} (v_j-v_t)x_j + \sum_{j\notin \blue{V}} v_jx_j }.
\]
Hence,
\begin{equation}
\label{eqn:ESelem}
\sqrt m E_\blue{U}^{\cP_\blue{V}, \cD} e_v =
\begin{cases}
\omega^{wv_{t}}e_{v'} & \text{if $\supp(v) = \blue{U}$;}\\
0 & \text{otherwise;}
\end{cases}
\end{equation}
where $e_{v'}$ is the element of the Fourier basis of $\bC^{\cD'}$ defined by
\[
v'_j = 
\begin{cases}
v_j - v_t &\text{for $j\in \blue{V}\setminus\{t\}$;}\\
v_j &\text{for $j\in [n]\setminus \blue{V}$.}
\end{cases}
\]
This suggests the following definition: we say that $v\in\Gr_m^n$ can be obtained from $u\in\Gr_m^n$ by a \emph{shift} in $\blue{V}$ if there exists $a\in\Gr_m$ such that
\[
v_j = 
\begin{cases}
u_j +a &\text{for $j\in \blue{V}$;}\\
u_j &\text{for $j\in [n]\setminus \blue{V}$.}
\end{cases}
\]
Equations~\ref{eqn:ESelem} and~\ref{eqn:GT} imply that for sequences $u,v\in\cD$, we have
\begin{equation}
\label{eqn:GTeu}
\absA|\ip<G_\blue{V} e_u,  G_\blue{V} e_v>| =
\begin{cases}
\alpha_{|u|}\alpha_{|v|} &\parbox{.5\textwidth}{if $\blue{V}\not\subseteq\supp(u)$, $\blue{V}\not\subseteq\supp(v)$ and $v$ can be\\ obtained from $u$ by a shift in $\blue{V}$;}\\
0&\text{otherwise.}
\end{cases}
\end{equation}

Now we are ready to continue with the proof of all statements in Claim~\ref{clm:2average}.

\begin{clm}
\label{clm:deltaGammadelta}
$\delta_\cP^* \Gamma\delta_\cQ\* = n^{k/(k+1)}$.
\end{clm}

\begin{proof}
\mycommand{0}{\mathbf{0}}
Let $\0 = 0^n\in\cD$ and $\0' = 0^{n-1}\in\cD'$.
Then, using Eqs~\ref{eqn:Gamma}, \ref{eqn:ESelem} and~\ref{eqn:alpha}, we have
\begin{eqnarray*}
\delta_\cP^* \Gamma\delta_\cQ\* &=& \sum_{\blue{V}\subseteq[n]: |\blue{V}|=k} \sk[ {n\choose k}^{-1/2} e_{\0'}]^* G_\blue{V} e_{\0} \\
&=& \sum_{\blue{V}\subseteq[n]: |\blue{V}|=k} \sk[ {n\choose k}^{-1/2} e_{\0'}]^* \alpha_0 e_{\0'} \\
&=& \sum_{\blue{V}\subseteq[n]: |\blue{V}|=k} {n\choose k}^{-1/2} {n\choose k}^{-1/2} n^{k/(k+1)} 
\\[1ex]
&=&  n^{k/(k+1)}. 
\end{eqnarray*}
\par\mbox{}\\[-9ex]
\end{proof}

\begin{clm}
\label{clm:normGamma}
$\norm|\Gamma| \le \sB[{1+O\s[n^{-1/(k+1)}] }] n^{k/(k+1)}$.
\end{clm}

\begin{proof}
The norm of $\Gamma$ equals the square root of the norm of
\begin{equation}
\label{eqn:GammaGamma}
\Gamma^*\Gamma = \sum_{\blue{V}\subseteq [n]:|\blue{V}|=k} G_\blue{V}^*G_\blue{V}\*.
\end{equation}
We upper bound the latter by the maximal $\ell_1$-norm of a column of the matrix in Eq.~\ref{eqn:GammaGamma} in the Fourier basis.  
Fix $v\in\cD$.  From Eq.~\ref{eqn:GTeu}, we get the following estimate on the diagonal entry corresponding to $e_v$:
\[
e_v^*\Gamma^*\Gamma e_v\* \le {n\choose k} \alpha_{|v|}^2 \le n^{2k/(k+1)}.
\]
Now, let us estimate the off-diagonal entries in the column corresponding to $e_v$.  \blue{Equation}~\ref{eqn:GTeu} tells us that any off-diagonal entry can only come from $G_\blue{V}^*G_\blue{V}\*$ where $\blue{V}\cap\supp(v)\ne\emptyset$, and also each such $\blue{V}$ contributes at most $k$ off-diagonal entries.  Thus, the sum of the absolute values of \blue{these} off-diagonal entries is at most
\[
|v|{n\choose k-1}\cdot k\cdot \alpha_0^2 = O\s[n^{-1/(k+1)}] n^{2k/(k+1)}
\]
since we may assume that $|v|\le n^{k/(k+1)}$ (otherwise, $\alpha_{|v|}=0$).  Summing both contributions, we get the required bound. 
\end{proof}
\blue{This concludes the proof of Claim~\ref{clm:2average}.}
\end{proof}
\begin{proof}[Proof of Theorem~\ref{thm:1average}.]  
Let $\cS$ be the algorithm of Theorem~\ref{thm:1average}.  
We~apply Theorem~\ref{thm:adv} to the algo\-rithm $\cA$ defined as follows,
using the constraints from Claim~\ref{clm:2average} to evaluate $\Adv$.
First, $\cA$ executes $\cS$ on its input.  Let $\{t_1,\dots,t_k\}$ be the output of $\cS$.
The algorithm $\cA$ then queries the elements $x_{t_1},\dots,x_{t_k}$.  It accepts if $x_{t_1}+\cdots+x_{t_k} = w$, and rejects otherwise.

The query complexity of $\cA$ is $T+k = T+O(1)$.  
The acceptance probability on distribution $\cP$ is $s_\cP = \epsn$.
Also, since $\cA$ always rejects a negative input,
\[
s_\cQ \le \Pr_{x\sim \cQ} \skA[\text{the input $x$ is positive}] \le \frac1m{n\choose k},
\]
the last inequality following from the union bound.
Thus, we have the following estimate on $\tau(s_\cP,s_\cQ)$:
\[
\tau(s_\cP, s_\cQ) = \sqrt{\strut s_\cP s_\cQ} + \sqrt{\strut (1-s_\cP)(1-s_\cQ)}
\le \sqrt{\frac1m{n\choose k}} + 1 - \frac{\epsn}2,
\]
and using the conditions on $m$ and $\epsn$, we obtain:
\begin{eqnarray*}
\frac{\delta_\cP^*\Gamma\delta_\cQ\* -  \tau(s_\cP,s_\cQ) \|\Gamma\|} {\|\Gamma\hadamard\Delta_j \|}
&=& \frac{n^{k/(k+1)} - \sA[1-\Omega(\epsn)] \sB[{1+O\s[n^{-1/(k+1)}] }] n^{k/(k+1)} }{O(1)} \\
&=& \Omega\s[\epsn n^{k/(k+1)}].  
\end{eqnarray*}
\par\mbox{}\\[-8ex]\qed\mbox{}\\[-3ex]
\end{proof}

\section{\mbox{Composition Theorem for the Average-Case Adversary Bound}}
\label{sec:composition}

We now prove 
the last remaining theorem needed to obtain the 
 lower bound on the average case complexity of $\ksum \circ \psearch_\innern^\outern$
(see Section~\ref{sec:hidden}). Recall that in this version, each input variable $x_i\in\Gr_m$ is embedded into a ``bucket'', that is, a sequence
$(x_{i1},\dots,x_{i\ell})\in (\Gr_m\cup\{\star\})^\innern$ in which
exactly one element is non-$\star$.
To apply our average-case adversary lower bound method,
we need to define the probability distributions and the matrix that appears in Eq.~\ref{eqn:adv}
for the composed problem. Intuitively, this is done by tensoring the matrix of the two problems that are composed, as well as the vectors that represent the probability distributions.
However, defining the matrix correctly to get a lower bound for the composed problem requires a careful analysis.

\blue{We use} the distributions $\cP_\F$ and $\cQ_\F$ to pick 
inputs to the outer function $\F$, and the uniform distribution to place each element of the input
independently in its bucket. Formally, we write 
$\cP = \cP_\F \otimes U_\innern^{\otimes \outern}$,  
where $U_\innern$ is the uniform distribution over $[\innern\,]$ and the distributions are viewed as real-valued vectors indexed by elements of their supports. The definition of $\cQ$ is similar, starting from $\cQ_\F$.

\begin{lemma} 
\label{lem:composition}
Let $\F: A^\outern \rightarrow B$, 
$\psearch_\innern:P\rightarrow A$ where $P\subseteq (A\cup \{\star\})^\innern$ 
is the set of all possible buckets, $\HH=\F \comp \psearch_\innern^\outern$, and $\cP_\F$, $\cQ_\F$, $\cP$ and $\cQ$ defined as above.
Then for any real numbers $s_\cP$, $s_\cQ\in [0,1]$ and matrix $\Gamma_\F$, there exists a matrix $\Gamma_\HH$ such that
\[ \Adv(\Gamma_\HH;\cP,s_\cP; \cQ,s_\cQ) \geq \Adv(\Gamma_\F;\cP_\F, s_{\cP}; \cQ_\F, s_{\cQ} ) \, \sqrt{\innern-1} \, . \]
\end{lemma}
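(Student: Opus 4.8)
The plan is to build $\Gamma_\HH$ by tensoring $\Gamma_\F$ with a suitable ``gadget'' matrix $\Gamma_{\psearch}$ that captures the hardness of a single $\psearch_\innern$ instance, i.e.\ set $\Gamma_\HH = \Gamma_\F \otimes \Gamma_{\psearch}^{\otimes \outern}$ (rows indexed by $(\mathrm{supp}\,\cP_\F)\times([\innern])^\outern$, columns by $(\mathrm{supp}\,\cQ_\F)\times([\innern])^\outern$, matching the product structure $\cP=\cP_\F\otimes U_\innern^{\otimes\outern}$ and $\cQ=\cQ_\F\otimes U_\innern^{\otimes\outern}$). The natural choice for the $\innern\times\innern$ gadget is $\Gamma_{\psearch}=\one_\innern$, the all-ones matrix, since $\psearch_\innern$ is an unstructured-search-type problem; one checks $\one_\innern$ has spectral norm $\innern$, while $\one_\innern\hadamard(\one_\innern-\I_\innern)$ (the off-diagonal part, corresponding to the two buckets disagreeing in position $j$) has spectral norm $\innern-1$. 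First I would record these two facts together with $\delta_{U_\innern}^*\one_\innern\delta_{U_\innern}=1$ (where $\delta_{U_\innern}[i]=1/\sqrt\innern$), which will make the numerator of $\Adv$ factor cleanly.

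Next I would evaluate the three ingredients of $\Adv(\Gamma_\HH;\cP,s_\cP;\cQ,s_\cQ)$ in Eq.~\ref{eqn:adv}. The numerator term $\delta_\cP^*\Gamma_\HH\,\delta_\cQ$ factors as $(\delta_{\cP_\F}^*\Gamma_\F\,\delta_{\cQ_\F})\cdot(\delta_{U_\innern}^*\one_\innern\delta_{U_\innern})^\outern = \delta_{\cP_\F}^*\Gamma_\F\,\delta_{\cQ_\F}$, since each gadget factor contributes $1$. The spectral norm is multiplicative under tensor product, so $\norm|\Gamma_\HH| = \norm|\Gamma_\F|\cdot\norm|\one_\innern|^\outern = \norm|\Gamma_\F|\cdot\innern^\outern$; and $\tau(s_\cP,s_\cQ)$ is literally the same quantity on both sides since the acceptance probabilities are passed through unchanged. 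Thus the numerator of $\Adv(\Gamma_\HH;\cdots)$ equals $\innern^{-\outern}$ — wait, one must be careful: $\delta_\cP,\delta_\cQ$ are unit vectors, so the $\innern^\outern$ factors in numerator and the normalization conspire; I would instead write $\delta_\cP = \delta_{\cP_\F}\otimes\delta_{U_\innern}^{\otimes\outern}$ directly and carry the $\delta_{U_\innern}^*\one_\innern\delta_{U_\innern}=1$ identity, so the numerator is exactly $\delta_{\cP_\F}^*\Gamma_\F\,\delta_{\cQ_\F} - \tau\,\norm|\Gamma_\F|$, i.e.\ $\innern^{-\outern}$ times... no: since $\norm|\Gamma_\HH|=\innern^\outern\norm|\Gamma_\F|$, the numerator is $\delta_{\cP_\F}^*\Gamma_\F\delta_{\cQ_\F} - \tau\innern^\outern\norm|\Gamma_\F|$, which is wrong. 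The fix is to rescale the gadget: take $\Gamma_{\psearch} = \one_\innern/\innern$, so $\norm|\Gamma_{\psearch}|=1$, $\delta_{U_\innern}^*\Gamma_{\psearch}\delta_{U_\innern}=1/\innern$ — still not matching. The correct route, and the step I expect to be the main obstacle, is to realize the off-diagonal/diagonal split of $\one_\innern$ is what matters: write $\Gamma_{\psearch}=\one_\innern$ and observe that in the denominator of $\Adv(\Gamma_\HH;\cdots)$ the relevant blocks $\Delta_j$ for $\HH$ act \emph{within a single bucket position}, so $\Gamma_\HH\hadamard\Delta_j$ picks out $\Gamma_\F\otimes(\cdots)\otimes(\one_\innern\hadamard\Delta^{(1)}_{i})\otimes(\cdots)$ on the relevant factor, and $\norm|\one_\innern\hadamard\Delta^{(1)}_i| = \innern-1$ rather than $\innern$. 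Meanwhile the original $\Delta_j$ blocks for $\F$ contribute factors bounded using $\norm|\one_\innern|=\innern$ on the untouched buckets. Carefully tracking which factor gets the ``$\innern-1$'' versus ``$\innern$'' gives the denominator of $\Adv(\Gamma_\HH)$ as $\max_j \norm|\Gamma_\F\hadamard\Delta_j|\cdot\innern^{\outern-1}\cdot(\innern-1)$, versus numerator $(\delta_{\cP_\F}^*\Gamma_\F\delta_{\cQ_\F}-\tau\norm|\Gamma_\F|)\cdot\innern^{\outern}$ — wait, I need the $\norm|\Gamma_\HH|$ in the numerator to also be $\innern^\outern\norm|\Gamma_\F|$, giving ratio $\innern/(\innern-1)$ times $\Adv(\Gamma_\F)$, which overshoots. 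So the honest accounting must produce $\sqrt{\innern-1}$, not $\innern-1$ or $\innern/(\innern-1)$; this means the gadget cannot simply be $\one_\innern$.

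Therefore I would reconsider and take the gadget to be the known \emph{worst-case $\psearch$ adversary matrix} whose spectral norm is $\sqrt{\innern-1}$ and whose $\Gamma\hadamard\Delta_j$ norms are $O(1)$ — concretely a matrix like $\Gamma_{\psearch}[u,v] = 1$ when $u\ne v$ (buckets with the non-$\star$ in different positions) suitably scaled, which is $\one_\innern-\I_\innern$ up to lower order, with $\norm|\one_\innern - \I_\innern| = \innern-1$; to get $\sqrt{\innern-1}$ one uses instead the fact that the \emph{adversary bound} for $\psearch_\innern$ between its positive distribution and uniform is $\Theta(\sqrt{\innern-1})$, via a matrix with a single nonzero row/column or a rank-one-like structure giving $\norm|\Gamma_{\psearch}|\approx\sqrt\innern$. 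Concretely: I would set $\Gamma_\HH = \Gamma_\F \otimes \Gamma_{\psearch}^{\otimes\outern}$ with $\Gamma_{\psearch}$ chosen so that $\delta_{U_\innern}^*\Gamma_{\psearch}\delta_{U_\innern} = \sqrt{\innern-1}$, $\norm|\Gamma_{\psearch}| = \sqrt{\innern-1}$, and $\norm|\Gamma_{\psearch}\hadamard\Delta^{(1)}_i| = 1$; then multiplicativity of all three quantities under tensor product gives numerator scaled by $(\sqrt{\innern-1})^\outern$ in \emph{both} the $\delta^*\Gamma\delta$ term and (via $\norm|\Gamma_\HH|=\norm|\Gamma_\F|(\sqrt{\innern-1})^\outern$) the $\tau\norm|\Gamma|$ term, so the numerator is exactly $(\sqrt{\innern-1})^\outern\bigl(\delta_{\cP_\F}^*\Gamma_\F\delta_{\cQ_\F}-\tau\norm|\Gamma_\F|\bigr)$; and the denominator $\max_j\norm|\Gamma_\HH\hadamard\Delta_j|$, where $\Delta_j$ now ranges over the $\outern\innern$ coordinates of $\HH$, splits as: for a coordinate inside bucket $a$, $\Gamma_\HH\hadamard\Delta_j = \Gamma_\F\otimes\Gamma_{\psearch}^{\otimes(a-1)}\otimes(\Gamma_{\psearch}\hadamard\Delta^{(1)}_i)\otimes\Gamma_{\psearch}^{\otimes(\outern-a)}$ — but this has no $\Gamma_\F\hadamard\Delta$ factor, so its norm is $\norm|\Gamma_\F|\cdot(\sqrt{\innern-1})^{\outern-1}\cdot 1$, which is potentially larger than desired and forces me instead to decompose $\Delta_j$ for $\HH$ to act jointly on $\Gamma_\F$'s block \emph{and} the bucket — this is exactly the ``careful analysis'' the paper warns about. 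I would resolve it by defining $\Gamma_\HH$ not as a plain tensor but as $\sum$ over which outer-block is ``active'', mirroring the worst-case composition of Ref.~\cite{bhkkls:crypto}: write $\Gamma_\HH$ so that $\Gamma_\HH\hadamard\Delta_j$ (for $j$ in bucket $a$) equals $(\Gamma_\F\hadamard\Delta_a)\otimes(\text{bucket terms})$ up to the factor $\norm|\Gamma_{\psearch}\hadamard\Delta^{(1)}|=1$ and $\norm|\Gamma_{\psearch}|^{\,\outern-1}=(\sqrt{\innern-1})^{\outern-1}$ on the inactive buckets. Then denominator $=\max_a\norm|\Gamma_\F\hadamard\Delta_a|\cdot(\sqrt{\innern-1})^{\outern-1}$, and the ratio numerator/denominator $= \sqrt{\innern-1}\cdot\Adv(\Gamma_\F;\cP_\F,s_\cP;\cQ_\F,s_\cQ)$, as claimed. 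The main obstacle, as flagged, is verifying that this $\Gamma_\HH$ simultaneously (i) has the clean product form on the $\delta^*\Gamma\delta$ term and on the global spectral norm and (ii) has the ``active block'' form on each $\Gamma_\HH\hadamard\Delta_j$ — i.e.\ reconciling the two different tensor decompositions $\Gamma_\HH$ must respect. I would do this by working in the Fourier/eigenbasis of $E_0,E_1$ as in Section~\ref{sec:ksum}, where $\Gamma_\F$'s block structure and the bucket structure both diagonalize compatibly, reducing everything to scalar inequalities, and I would close with the two claims ``$W^{(0)}$ large, $W^{(T)}$ small'' already packaged inside Theorem~\ref{thm:adv} applied to $\Gamma_\HH$.
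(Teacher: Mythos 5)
You correctly diagnose the two failure modes of a plain tensor product (with gadget $\one_\innern$ the term $\tau\,\|\Gamma_\HH\|$ swamps the numerator; and for any gadget, $\|\Gamma_\HH\hadamard\Delta_j\|$ loses the factor $\|\Gamma_\F\hadamard\Delta_a\|$ and picks up $\|\Gamma_\F\|$ instead), but your proposal never actually produces a matrix $\Gamma_\HH$ that avoids them. The final step is phrased as ``write $\Gamma_\HH$ so that $\Gamma_\HH\hadamard\Delta_j$ equals $(\Gamma_\F\hadamard\Delta_a)\otimes(\text{bucket terms})$'', which is precisely the property to be established, and the suggested Fourier-basis verification is neither carried out nor available in general, since the lemma must hold for an outer matrix $\Gamma_\F$ given only abstractly. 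The missing idea is the specific block construction used in the paper: index blocks of $\Gamma_\HH$ by pairs of outer inputs $(\X,\Y)$ and set $\Gamma_\HH^{\X,\Y}=\Gamma_\F[\X,\Y]\cdot\bigotimes_{i\in[\outern]}\overline{\Gamma}^{\X_i,\Y_i}$, where $\overline{\Gamma}^{a,b}=\one_\innern-\I_\innern$ when $a\neq b$, but $\overline{\Gamma}^{a,a}=\|\one_\innern-\I_\innern\|\cdot\I_\innern=(\innern-1)\,\I_\innern$. The modified diagonal gadget is the crux: using the natural $\psearch$ adversary matrix (zero diagonal blocks) would annihilate every block with $\X_i=\Y_i$ for some $i$, while a full matrix such as $\one_\innern$ ruins $\|\Gamma_\HH\hadamard\Delta_j\|$, exactly as you found. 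Note also that even for a genuine tensor product your intermediate identity $\Gamma_\HH\hadamard\Delta_j=\Gamma_\F\otimes\cdots\otimes(\Gamma_{\psearch}\hadamard\Delta_q)\otimes\cdots$ is false as stated: whether the composed coordinate (bucket $a$, index $q$) differs between the two inputs depends jointly on the bucket positions \emph{and} on whether $\X_a=\Y_a$, which is another reason the gadget must be allowed to depend on the agreement pattern of $(\X,\Y)$.

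With the paper's $\Gamma_\HH$, the three quantities in Eq.~\ref{eqn:adv} do factor: $\delta_\cP^*\Gamma_\HH\delta_\cQ=(\innern-1)^{\outern}\,\delta_{\cP_\F}^*\Gamma_\F\delta_{\cQ_\F}$, because the uniform bucket vector evaluates both kinds of gadget block to $\innern-1$; and $\|\Gamma_\HH\|=(\innern-1)^{\outern}\|\Gamma_\F\|$ together with $\|\Gamma_\HH\hadamard\Delta_j\|=\|\Gamma_\F\hadamard\Delta_p\|\,(\innern-1)^{\outern-1}\sqrt{\innern-1}$ are the nontrivial norm factorizations, which the paper imports from the extended version of Ref.~\cite{bhkkls:crypto} (valid for arbitrary outer functions) rather than reproving. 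Dividing, the ratio gains $(\innern-1)/\sqrt{\innern-1}=\sqrt{\innern-1}$, which is where the claimed factor comes from. So your proposal reaches the correct target values, but the construction of $\Gamma_\HH$ and the proofs (or citations) of these factorization claims are the entire content of the lemma, and they are absent; as it stands there is a genuine gap.
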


\begin{theorem}
\label{thm:bucket}
Any algorithm that finds a solution to the search version of $\ksum \comp\psearch_\innern^\outern$ within $T$ queries with probability $\epsn>0$  
on average over the uniform distribution on positive instances requires
\[
\frac T{\epsn} = \Omega\sB[\sqrt{\ell-1}\; n^{k/(k+1)} ]
\]
provided $m= \omega\Big(n^{k+\tfrac 2 {k+1}} \Big) $.
\end{theorem}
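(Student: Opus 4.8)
The plan is to convert the hypothetical search algorithm into a one-sided distinguisher, push the adversary matrix $\Gamma$ of Claim~\ref{clm:2average} through the composition Lemma~\ref{lem:composition}, and then finish with exactly the numerical estimate that closes the proof of Theorem~\ref{thm:1average}. So the work splits into a search-to-distinguishing reduction, an application of Lemma~\ref{lem:composition}, and a reuse of the Claim~\ref{clm:2average} bookkeeping.

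\emph{Search $\to$ one-sided distinguishing.} Let $\cS$ make $T$ queries and, on a uniformly random positive instance of $\ksum\comp\psearch_\innern^\outern$, output a full candidate witness with probability $\epsn$: that is, $k$ indices $i_1<\cdots<i_k$ together with the claimed non-$\star$ position inside each of the $k$ corresponding buckets (this matches Definition~\ref{def:cksum} and the protocol setting, where the eavesdropper must exhibit preimages). Define $\cA$: run $\cS$, make the $k$ further queries reading those $k$ claimed entries, and accept iff they are all non-$\star$ and sum to $w$. Then $\cA$ makes $T+k$ queries and never accepts a negative input, so with $\cQ$ the uniform distribution over \emph{all} inputs its acceptance probability is $s_\cQ\le\Pr_{x\sim\cQ}[x\text{ is positive}]\le\binom nk/m$, whereas on positive inputs its acceptance probability is $\epsn$. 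Here I would take $\cP_\F$ to be the three-step distribution of Section~\ref{sec:ksum} on $\ksum$-inputs and $\cQ_\F$ the uniform distribution on all $\ksum$-inputs, so that $\cQ=\cQ_\F\otimes U_\innern^{\otimes\outern}$ and $\cP:=\cP_\F\otimes U_\innern^{\otimes\outern}$ in the notation of Lemma~\ref{lem:composition}; since $m=\omega(n^k)$, all but a vanishing fraction of positive composed instances have a unique witness $k$-subset, so $\cP$ is within $o(1)$ total-variation distance of the uniform distribution on positive instances, and $\cA$'s acceptance probability on $\cP$ is $\epsn$ up to an $o(1)$ additive term, which is negligible relative to $\epsn$ in the regime where the conclusion has content.

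\emph{Composition and the adversary bound.} Apply Lemma~\ref{lem:composition} with $\F=\ksum$ (hence $\outern=n$), with $\Gamma_\F=\Gamma$ from Claim~\ref{clm:2average}, and with $s_\cP=\epsn$, $s_\cQ$ as above. This produces a matrix $\Gamma_\HH$ with
\[
\Adv(\Gamma_\HH;\cP,s_\cP;\cQ,s_\cQ)\ \ge\ \Adv(\Gamma_\F;\cP_\F,s_\cP;\cQ_\F,s_\cQ)\,\sqrt{\innern-1}\,.
\]
By Theorem~\ref{thm:adv} applied to $\cA$, we get $T+k\ge\Adv(\Gamma_\HH;\cP,s_\cP;\cQ,s_\cQ)$, and it remains to lower-bound $\Adv(\Gamma_\F;\cP_\F,s_\cP;\cQ_\F,s_\cQ)$. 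That quantity is purely combinatorial and is evaluated exactly as at the end of the proof of Theorem~\ref{thm:1average}: plug the three constraints of Claim~\ref{clm:2average} into Eq.~\ref{eqn:adv}, bound $\tau(s_\cP,s_\cQ)\le\sqrt{s_\cQ}+1-\epsn/2$ and note $\sqrt{s_\cQ}=O(\sqrt{\binom nk/m})=o(n^{-1/(k+1)})$ by the hypothesis $m=\omega(n^{k+\frac2{k+1}})$, obtaining $\Adv(\Gamma_\F;\cP_\F,s_\cP;\cQ_\F,s_\cQ)=\Omega(\epsn\,n^{k/(k+1)})$, under the same side condition $\epsn=\omega(n^{-1/(k+1)})$ as in Theorem~\ref{thm:1average} (for smaller $\epsn$ the stated bound is either trivial, its right-hand side dropping below the $\Omega(1)$ query floor, or is recovered by amplifying $\cS$ before the reduction). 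Combining, $T+k=\Omega(\epsn\sqrt{\innern-1}\,n^{k/(k+1)})$; since the right-hand side dominates $k=O(1)$ whenever the conclusion is non-trivial, the additive $k$ is absorbed and dividing by $\epsn$ gives $T/\epsn=\Omega(\sqrt{\innern-1}\,n^{k/(k+1)})$.

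\emph{Where the difficulty lies.} The genuine content is already packaged in Lemma~\ref{lem:composition} (the $\sqrt{\innern-1}$ gain from the inner $\psearch$) and in the matrix of Claim~\ref{clm:2average}; assembling Theorem~\ref{thm:bucket} is mostly bookkeeping, and the points that actually need care are: (i) making the distinguisher \emph{one-sided} via the verification step, which is precisely what forces $s_\cQ$ to be the tiny quantity $\binom nk/m$ instead of something near $1$; (ii) the fact that the \emph{same} pair $(s_\cP,s_\cQ)$ threads through both sides of Lemma~\ref{lem:composition} and through Theorem~\ref{thm:adv}, so the composition loses no information about the algorithm; and (iii) checking that all of the error terms — the $k$ extra queries, the $O(n^{-1/(k+1)})$ slack in $\norm|\Gamma|$, the $\sqrt{s_\cQ}$ contribution to $\tau$, and the $o(1)$ gap between $\cP$ and the uniform distribution on positive instances — are swallowed by the hypotheses $m=\omega(n^{k+\frac2{k+1}})$ and $\epsn=\omega(n^{-1/(k+1)})$. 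If I had to name the main obstacle, it is this last error-absorption in the regime of small $\epsn$, where one must be explicit that the bound degrades gracefully rather than breaking.
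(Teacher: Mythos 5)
Your proposal is correct and follows essentially the same route as the paper: run the search algorithm, verify its candidate witness to get a one-sided distinguisher with $s_\cP=\epsn$ and $s_\cQ\le\binom nk/m$, push the Claim~\ref{clm:2average} matrix through Lemma~\ref{lem:composition}, and finish with the same arithmetic as in Theorem~\ref{thm:1average}. If anything, you are more explicit than the paper about two points it glosses over --- the planted distribution $\cP$ versus the uniform distribution on positive instances, and the need for a condition like $\epsn=\omega(n^{-1/(k+1)})$ (the paper's proof simply assumes $\epsn$ non-vanishing) --- and your handling of both is sound.
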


The rest of this section is devoted to the proof of Theorem~\ref{thm:bucket}.
It follows closely the proof of the composition theorem in \blue{Ref.}~\cite{bhkkls:crypto},
and in particular the adversary matrix for $\HH$ 
we use here has the same structure as the matrices considered in that paper.
This \blue{allows} us to re-use some of the calculations from that paper (see Claims~\ref{claim:comp2} and \ref{claim:comp3}).

\blue{We use} the following notation.
\blue{Let}~$\X,\Y \in A^\outern$ denote inputs to $\F$. Its components are $\X_i \in A$.
The value $\Gamma_\F[\X,\Y]$ is a scalar. Notice that for the $\ksum$ problem, the rows of the matrix
defined in the previous section are only defined for positive inputs. In order to reuse
the norm calculations \blue{from} 
the composition theorem in \blue{Ref.}~\cite{bhkkls:crypto}, we need to extend it to all possible inputs.
We~do so by extending the matrix for $\ksum$ with rows of zeros. This transform\blue{ation} does not
change the norm of the matrix.
Similarly, the vector $s_{\cP_\F}$ can be extended with zeros to be defined for any input.

\begin{proof}[Proof of Lemma~\ref{lem:composition}]
The adversary matrix for the composed problem $\HH$ \blue{is} denoted $\Gamma_\HH$.
\blue{We~consider} blocks of $\Gamma_\HH$ indexed by values $\X,\Y\!$, which we denote 
$\Gamma_\HH^{\X,\Y}$. 
(These $\innern^\outern \times \innern^\outern$ blocks are a 
submatrix corresponding to all the inputs for which the input 
to $\F$ is $\X$, in the rows, and~$\Y\!$, in the columns.)
As in \blue{Ref.}~\cite{bhkkls:crypto},
we define $\Gamma_\HH$ by blocks as follows:
\[ 
\Gamma_\HH^{\X,Y} = 
	\Gamma_F[\X,\Y] \cdot \bigotimes_{i \in [\outern]} \overline{\Gamma}^{\X_i,\Y_i},
\]
where for $a,b\in A$,
\[\overline{\Gamma}^{a,b} = 
\begin{cases}
	\|{\one_\innern - \I_\innern}\| \cdot \I_\innern	 &\text{if $a = b$}\\
	\one_\innern - \I_\innern & \text{otherwise}.
\end{cases}
\]
An optimal adversary matrix for \psearch\ can be obtained by taking $\one_\innern - \I_\innern$ for all blocks except the diagonal ones that are all zeroes.
But if we were using it, a block $\Gamma_\HH^{X,Y} $ would be zero whenever there is an $i$ such that $X_i=Y_i$. 
Using the matrix $\overline \Gamma$, with modified diagonal blocks, overcomes this issue.

From the distributions $\cP_F$ and $\cQ_F$, we define the
vector $\delta_{\cP_F} = \sqrt{\cP_F}$, that is, 
$\delta_{\cP_\F}[\X]=\sqrt{\Pr_{\X\sampledfrom \cP_F}[\X]}$ (similarly for $\delta_{\cQ_\F}$).
Again, we can split $\delta_{\cP_\F}$ into blocks $\delta_{\cP_\F}^\X$.

With these definitions in hand, we can compute the terms that appear in
Eq.~\ref{eqn:adv} of Definition~\ref{def:adv}. This is done in Claims~\ref{claim:comp1}, \ref{claim:comp2}, and \ref{claim:comp3}. When referring to Ref.~\cite{bhkkls:crypto},  we use $S_i = \one_\innern - \I\innern$ for all $i$ ($1\leq i \leq n$).

\begin{clm} \label{claim:comp1}
$\delta_\cP^\dagger \Gamma_\HH \delta_\cQ  = 
	\delta_{\cP_\F}^\dagger \Gamma_F \delta_{\cQ_F} \cdot \|{\one_\innern - \I_\innern}\|^\outern$.
\end{clm}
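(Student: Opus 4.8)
The plan is to compute $\delta_\cP^\dagger \Gamma_\HH \delta_\cQ$ directly from the block structure, exploiting the tensor-product form of both the matrix $\Gamma_\HH$ and the vectors $\delta_\cP, \delta_\cQ$. Recall that $\cP = \cP_\F \otimes U_\innern^{\otimes \outern}$, so $\delta_\cP$ decomposes into blocks $\delta_\cP^\X = \delta_{\cP_\F}[\X]\cdot \bigotimes_{i\in[\outern]} \delta_{U_\innern}$, where $\delta_{U_\innern}$ is the uniform unit vector $\frac{1}{\sqrt{\innern}}\allone_\innern$ in $\bR^\innern$ (and similarly for $\cQ$ with $\cQ_\F$ in place of $\cP_\F$). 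First I would expand
\[
\delta_\cP^\dagger \Gamma_\HH \delta_\cQ = \sum_{\X,\Y} (\delta_\cP^\X)^\dagger \Gamma_\HH^{\X,\Y} \delta_\cQ^\Y = \sum_{\X,\Y} \delta_{\cP_\F}[\X]\,\delta_{\cQ_\F}[\Y]\,\Gamma_\F[\X,\Y] \prod_{i\in[\outern]} (\delta_{U_\innern})^\dagger \overline{\Gamma}^{\X_i,\Y_i} \delta_{U_\innern},
\]
using the fact that the inner product of a tensor product of vectors against a tensor product of operators factorizes coordinatewise.

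The key step is then to evaluate the scalar $(\delta_{U_\innern})^\dagger \overline{\Gamma}^{a,b} \delta_{U_\innern}$ for both cases $a=b$ and $a\neq b$, and to observe that the answer is the \emph{same} in both cases, namely $\|\allone_\innern - \I_\innern\|$. Indeed, when $a=b$ we get $\|\allone_\innern - \I_\innern\| \cdot (\delta_{U_\innern})^\dagger \I_\innern \delta_{U_\innern} = \|\allone_\innern - \I_\innern\|$ since $\delta_{U_\innern}$ is a unit vector; and when $a\neq b$ we get $(\delta_{U_\innern})^\dagger (\allone_\innern - \I_\innern) \delta_{U_\innern}$. Here I would use that $\delta_{U_\innern} = \frac{1}{\sqrt\innern}\allone_\innern$ is precisely the top eigenvector of $\allone_\innern$ with eigenvalue $\innern$ and of $\I_\innern$ with eigenvalue $1$, so this equals $\innern - 1 = \|\allone_\innern - \I_\innern\|$ (the spectral norm of $\allone_\innern - \I_\innern$ being $\innern-1$). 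Because the two cases agree, the product $\prod_{i\in[\outern]}(\delta_{U_\innern})^\dagger \overline{\Gamma}^{\X_i,\Y_i}\delta_{U_\innern}$ collapses to $\|\allone_\innern - \I_\innern\|^\outern$ regardless of $\X$ and $\Y$, so it factors out of the double sum.

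Finally, pulling that constant out leaves $\|\allone_\innern - \I_\innern\|^\outern \sum_{\X,\Y} \delta_{\cP_\F}[\X]\,\Gamma_\F[\X,\Y]\,\delta_{\cQ_\F}[\Y] = \|\allone_\innern - \I_\innern\|^\outern \cdot \delta_{\cP_\F}^\dagger \Gamma_\F \delta_{\cQ_\F}$, which is the claimed identity. (Here one should note the remark in the excerpt that $\Gamma_\F$ and $\delta_{\cP_\F}$ are extended with zero rows to all inputs, so the sum over $\X,\Y$ ranging over all of $A^\outern$ matches the matrix-vector product.) The main obstacle — really the only subtle point — is the deliberate choice of the diagonal blocks $\overline\Gamma^{a,a} = \|\allone_\innern - \I_\innern\|\cdot \I_\innern$: this is exactly what makes the $a=b$ and $a\neq b$ contributions to the product equal, so that blocks with repeated coordinates $\X_i=\Y_i$ do not vanish and the factorization goes through cleanly. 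Everything else is bookkeeping with tensor products.
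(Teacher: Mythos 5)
Your proof is correct and follows essentially the same route as the paper's: decompose $\delta_\cP$ and $\delta_\cQ$ into blocks $\delta_{\cP_\F}[\X]\otimes\delta_{U_\innern}^{\otimes\outern}$, factor the quadratic form coordinatewise over the tensor product, and observe that $(\delta_{U_\innern})^\dagger\,\overline{\Gamma}^{a,b}\,\delta_{U_\innern}=\norm|\one_\innern-\I_\innern|$ whether $a=b$ or $a\neq b$, so the factor $\norm|\one_\innern-\I_\innern|^{\outern}$ pulls out of the sum. If anything, your write-up is more explicit than the paper's about why the uniform unit vector picks out exactly the spectral norm of each block, which is the only nontrivial point.
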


\begin{clm}\cite[claim on last line of page 409]{bhkkls:crypto} \label{claim:comp2}
$\|\Gamma_\HH\| = \|\Gamma_\F\| \cdot \|{\one_\innern - \I_\innern}\|^\outern$.
\end{clm}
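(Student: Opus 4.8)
The plan is to recognize $\Gamma_\HH$ as the composed adversary matrix of \cite{bhkkls:crypto} specialised to taking every search block equal to $\one_\innern-\I_\innern$, and then to import that paper's norm computation; a self-contained verification would run as follows. First I would record two elementary facts about the blocks $\overline{\Gamma}^{a,b}$. Write $c=\norm|\one_\innern-\I_\innern|=\innern-1$ and let $w\in\bR^\innern$ be the unit vector all of whose entries equal $\innern^{-1/2}$. Since $w$ is a top eigenvector of both $c\I_\innern$ and $\one_\innern-\I_\innern$, one has $w^\dagger\overline{\Gamma}^{a,b}w=c$ for all $a,b\in A$; and each $\overline{\Gamma}^{a,b}$ is a real linear combination of the orthogonal projectors $ww^\dagger$ and $\I_\innern-ww^\dagger$ (with coefficients $(c,c)$ if $a=b$ and $(c,-1)$ if $a\ne b$), hence is block diagonal for the splitting $\bR^\innern=\bR w\oplus w^\perp$. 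Consequently $\Gamma_\HH$ itself is block diagonal for the orthogonal decomposition of the bucket space $(\bR^\innern)^{\otimes\outern}$ into the $2^\outern$ sectors $\bigotimes_{i\in T}w^\perp\otimes\bigotimes_{i\notin T}\bR w$, $T\subseteq[\outern]$, so that $\norm|\Gamma_\HH|$ is the maximum over $T$ of the norm of the restriction of $\Gamma_\HH$ to the $T$-th sector. This reduces the problem to estimating those restrictions.

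For the lower bound I would simply test $\Gamma_\HH$ on the unit vectors $u\otimes w^{\otimes\outern}$ and $v\otimes w^{\otimes\outern}$, where $u,v$ achieve $u^\dagger\Gamma_\F v=\norm|\Gamma_\F|$: since $w^\dagger\overline{\Gamma}^{a,b}w=c$ for all $a,b$, the result is $c^\outern\,u^\dagger\Gamma_\F v=c^\outern\norm|\Gamma_\F|$, whence $\norm|\Gamma_\HH|\ge\norm|\Gamma_\F|\cdot c^\outern$. Equivalently, the restriction of $\Gamma_\HH$ to the all-flat sector $T=\emptyset$ is, up to trivial one-dimensional bucket factors, exactly $c^\outern\Gamma_\F$.

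For the matching upper bound I would analyse the remaining sectors. On the sector $T$ each $\overline{\Gamma}^{a,b}$ acts on its bucket factors as a scalar --- equal to $c$ on the flat factors ($i\notin T$) and to $c\cdot 1_{a=b}-1_{a\ne b}$ on the $w^\perp$-factors ($i\in T$) --- so the restriction of $\Gamma_\HH$ to that sector equals $c^{\outern-|T|}$ times the Hadamard modification $\Gamma_\F\hadamard L_T$ of $\Gamma_\F$, tensored with the identity on the $w^\perp$-factors, where $L_T[\X,\Y]=\prod_{i\in T}\sk[c\cdot 1_{\X_i=\Y_i}-1_{\X_i\ne\Y_i}]$. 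Hence everything reduces to proving $\norm|\Gamma_\F\hadamard L_T|\le c^{|T|}\norm|\Gamma_\F|$ for every $T$; combined with the value on the flat sector this gives $\norm|\Gamma_\HH|\le\norm|\Gamma_\F|\cdot c^\outern$, and together with the lower bound the claimed equality follows.

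I expect this last estimate to be the real obstacle. It fails for a generic matrix in place of $\Gamma_\F$ --- Hadamard products can enlarge the spectral norm --- so the proof must exploit the specific structure of the $\ksum$ adversary matrix of Section~\ref{sec:ksum}, namely its expansion in the Fourier projectors $E_{\blue{U}}$ (the same structure that delivered $\norm|\Gamma\hadamard\Delta_j|=O(1)$ there). This is exactly the norm computation performed in \cite{bhkkls:crypto}, which I would cite rather than repeat; the only point to check is that our $\Gamma_\HH$, with every search block equal to $\one_\innern-\I_\innern$, is the block matrix considered there, and that is immediate from the definitions above.
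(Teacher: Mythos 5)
Your sector analysis is correct and is in fact more explicit than anything the paper offers for this claim: the paper proves Claim~\ref{claim:comp2} purely by citation to the arXiv extended version of Ref.~\cite{bhkkls:crypto}, remarking that the computation there never uses an explicit description of the outer adversary matrix and therefore applies to any outer function $\F$. Your block-diagonalization over the $2^\outern$ sectors, the lower bound $\|\Gamma_\HH\|\ge(\innern-1)^\outern\|\Gamma_\F\|$ via the flat vectors, and the identification of the sector-$T$ restriction with $(\innern-1)^{\outern-|T|}\,(\Gamma_\F\hadamard L_T)\otimes\mathrm{id}$ are all right, and they correctly isolate the crux, namely $\|\Gamma_\F\hadamard L_T\|\le(\innern-1)^{|T|}\|\Gamma_\F\|$.

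The gap is that this crux is neither proven nor correctly sourced. You assert that the estimate must exploit the Fourier structure of the $\ksum$ matrix of Section~\ref{sec:ksum} and that ``this is exactly the norm computation performed in''~\cite{bhkkls:crypto}; but that reference treats Element Distinctness composed with search and, precisely because no explicit adversary matrix for \ED{} was known at the time, its calculation is (claimed to be) structure-free in $\Gamma_\F$ --- it contains no $\ksum$-specific, $E_\blue{U}$-based argument for you to import. So your plan is internally inconsistent: either the cited computation is generic in $\Gamma_\F$, contradicting your claim that specific structure is needed, or it is not, and then the estimate you need is nowhere. Your instinct that a fully generic bound cannot work is in fact sound: with $\outern=1$, $\innern=2$ and $\Gamma_\F=\one_3-2\I_3$ one gets $\|\Gamma_\F\|=2$ but $\|\Gamma_\HH\|\ge 3$ (test $u\otimes(1,-1)/\sqrt2$ with $u$ uniform), so the equality of Claim~\ref{claim:comp2} is false for an arbitrary matrix $\Gamma_\F$, even though Lemma~\ref{lem:composition} is stated for arbitrary $\Gamma_\F$; similar examples exist for every $\innern$ once $|A|>\innern$. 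Hence the claim can only hold under additional hypotheses on $\Gamma_\F$ (for instance, entrywise nonnegativity makes your inequality easy, but the Section~\ref{sec:ksum} matrix is not entrywise nonnegative). To complete the proof you must either extract the exact hypotheses under which the norm computation of Ref.~\cite{bhkkls:crypto} goes through and verify them for the $\ksum$ matrix, or prove $\|\Gamma_\F\hadamard L_T\|\le(\innern-1)^{|T|}\|\Gamma_\F\|$ directly from the $E_\blue{U}$ expansion; as it stands, the citation plus a claim of structure-dependence does not establish the statement.
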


\begin{clm}\cite[claim near the end of page 410]{bhkkls:crypto} \label{claim:comp3}
For a query $i$ that corresponds to index~$q$ in the bucket~$p$, 
$\|\Gamma_\HH\hadamard\Delta_i \| = \|\Gamma_\F \hadamard\Delta_p \| \cdot \| \one_\innern - \I_\innern\|^{\outern-1}\cdot \|({\one_\innern - \I_\innern)\hadamard \Delta_q}\|$.
\end{clm}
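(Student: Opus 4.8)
The plan is to write $\Gamma_\HH\hadamard\Delta_i$ out explicitly in block form and then recognise it as exactly the matrix whose spectral norm is evaluated near the end of page~410 of \cite{bhkkls:crypto}, so that the stated identity may be imported from there with every inner adversary matrix taken to be $S_j=\one_\innern-\I_\innern$.

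First I would index the rows and columns of $\Gamma_\HH$ by pairs $(\X,c)$ with $\X\in A^\outern$ and $c\in[\innern]^\outern$, where $c_j$ records the position of the unique non-$\star$ symbol of the $j$-th bucket, so that the block $\Gamma_\HH^{\X,\Y}$ is $\Gamma_\F[\X,\Y]\,\bigotimes_{j\in[\outern]}\overline{\Gamma}^{\X_j,\Y_j}$, viewed as a matrix indexed by pairs $(c,d)$. If the query $i$ reads position $q$ of bucket $p$, then the entry of $\Delta_i$ at $\bigl((\X,c),(\Y,d)\bigr)$ depends only on $c_p$ and $d_p$ (and, in the single case $c_p=d_p=q$, on whether $\X_p=\Y_p$); hence, inside the block, $\Delta_i$ is the tensor product equal to $\one_\innern$ in every slot $j\neq p$ and to an $\innern\times\innern$ matrix $\Delta_q$ (depending on $\X_p,\Y_p$) in slot $p$. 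Multiplying slot by slot, the $(\X,\Y)$-block of $\Gamma_\HH\hadamard\Delta_i$ is then $\Gamma_\F[\X,\Y]$ times the tensor product whose slot-$j$ factor is $\overline{\Gamma}^{\X_j,\Y_j}$ for $j\neq p$ and $\overline{\Gamma}^{\X_p,\Y_p}\hadamard\Delta_q$ in slot $p$.

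Next I would collapse the slot-$p$ factor. When $\X_p=\Y_p$ it vanishes, because $\overline{\Gamma}^{a,a}=(\innern-1)\,\I_\innern$ is diagonal whereas $\Delta_q$ has a zero diagonal in that case; when $\X_p\neq\Y_p$ it equals $(\one_\innern-\I_\innern)\hadamard\Delta_q$ independently of the two values, and that matrix is the adjacency matrix of the star $K_{1,\innern-1}$ centred at $q$, so $\norm|(\one_\innern-\I_\innern)\hadamard\Delta_q|=\sqrt{\innern-1}$ --- which is precisely the source of the $\sqrt{\innern-1}$ gain in Lemma~\ref{lem:composition}. Since $\Gamma_\F[\X,\Y]\,1_{\X_p\neq\Y_p}=(\Gamma_\F\hadamard\Delta_p)[\X,\Y]$, this identifies $\Gamma_\HH\hadamard\Delta_i$ as the block matrix whose $(\X,\Y)$-block is $(\Gamma_\F\hadamard\Delta_p)[\X,\Y]\cdot\bigl(\bigotimes_{j\neq p}\overline{\Gamma}^{\X_j,\Y_j}\bigr)\otimes\bigl((\one_\innern-\I_\innern)\hadamard\Delta_q\bigr)$, the last factor occupying slot $p$. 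This has exactly the shape of the matrix handled in \cite{bhkkls:crypto}: outer matrix $\Gamma_\F$, inner matrices $\one_\innern-\I_\innern$, and one Hadamard-masked factor in the bucket the query touches; quoting their computation gives $\norm|\Gamma_\HH\hadamard\Delta_i|=\norm|\Gamma_\F\hadamard\Delta_p|\cdot\norm|\one_\innern-\I_\innern|^{\outern-1}\cdot\norm|(\one_\innern-\I_\innern)\hadamard\Delta_q|$, as claimed.

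The genuine work --- which I would import rather than reprove --- lies in that last norm computation, and in its ``$\leq$'' direction in particular. The ``$\geq$'' direction is routine: conjugating the inner register of each bucket $j\neq p$ by a unitary with uniform first column turns every $\overline{\Gamma}^{\X_j,\Y_j}$ into a diagonal matrix with common top entry $\innern-1$, and restricting those registers to the uniform state exhibits $\norm|\one_\innern-\I_\innern|^{\outern-1}\,(\Gamma_\F\hadamard\Delta_p)\otimes\bigl((\one_\innern-\I_\innern)\hadamard\Delta_q\bigr)$ as a submatrix. For ``$\leq$'' one must check that, after the same conjugation, the orthogonal summands of $\Gamma_\HH\hadamard\Delta_i$ coming from registers $j\neq p$ that leave the uniform state --- each a Hadamard-masked submatrix of $\Gamma_\F\hadamard\Delta_p$ tensored with the same star matrix --- never exceed the uniform-everywhere summand in norm; here the modified diagonal blocks $\overline{\Gamma}^{a,a}=(\innern-1)\I_\innern$ (rather than the crude choice $0$) are essential, and the non-expansion for matrices of exactly this form is the delicate step already carried out in \cite{bhkkls:crypto} and underlying the companion Claim~\ref{claim:comp2}. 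So the main obstacle is not in the bookkeeping above but in verifying that the structural match with \cite{bhkkls:crypto} is tight enough to reuse that norm bound verbatim.
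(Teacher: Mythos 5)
Your proposal is correct and follows essentially the same route as the paper: the paper itself establishes this claim by citing the norm computation in the arXiv extended version of Ref.~\cite{bhkkls:crypto}, justified by the observation that $\Gamma_\HH$ has exactly the tensor-block structure treated there (inner matrices $\one_\innern-\I_\innern$ with the modified diagonal blocks $\|\one_\innern-\I_\innern\|\cdot\I_\innern$, arbitrary outer matrix $\Gamma_\F$). Your explicit bookkeeping --- the per-block factorization of $\Delta_i$, the vanishing of the slot-$p$ factor when $\X_p=\Y_p$, the identification of the remaining factor as the star matrix of norm $\sqrt{\innern-1}$, and the correct attribution of the delicate ``$\le$'' direction to the cited reference --- matches what that imported proof does, so nothing is missing.
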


Claims~\ref{claim:comp2} and~\ref{claim:comp3} were proven in
\blue{the arXiv extended version of} Ref.~\cite{bhkkls:crypto}.
Although the claims in \blue{the original Crypto version of Ref.}~\cite{bhkkls:crypto} consider specifically the
Element Distinctness problem, the paper mentions that an explicit description of the adversary matrix is not needed (such a description was indeed unknown when this proof was given).
For this reason, these two claims apply to any outer function $\F$, and in particular to $\ksum$. 
\blue{Note} that \blue{the arXiv extended version of} Ref.~\cite{bhkkls:crypto} contains the proofs for arbitrary outer functions.

\begin{proof}[Proof of Claim~\ref{claim:comp1}]
\begin{eqnarray*}
\delta_\cP^\dagger \Gamma_\HH \delta_\cQ  &= &
	\sum_{\X,\Y} (\delta_\cP^\X)^\dagger \Gamma_\HH^{\X,\Y} \delta_\cQ^\Y  \\
	&=& 
	\sum_{\X,\Y} 
		(\delta_\cP^\X)^\dagger 
		\left(\Gamma_F[\X,\Y] \cdot \bigotimes_{i \in [\outern]} \overline{\Gamma}^{\X_i,\Y_i}\right)
  		\delta_\cQ^\Y  \\
	&=& 
	\sum_{\X,\Y} 
		(\delta_{\cP_F^\X} \otimes \sqrt{U_\innern^{\otimes \outern}})^\dagger 
		\left(\Gamma_F[\X,\Y] \cdot \bigotimes_{i \in [\outern]} \overline{\Gamma}^{\X_i,\Y_i}\right)
		(\delta_{\cP_F^\Y} \otimes \sqrt{U_\innern^{\otimes \outern}}) \\
	&=& 
	\sum_{\X,\Y} 
		\left(\delta_{\cP_F^\X}\right)^\dagger 
		\left(\Gamma_F[\X,\Y] \right)
		\left(\delta_{\cP_F^\Y} \right) 
		\sqrt{U_\innern^{\otimes \outern}}
		\bigotimes_{i \in [\outern]} \overline{\Gamma}^{\X_i,\Y_i}\sqrt{U_\innern^{\otimes \outern}}\\
	&=& 
	\sum_{\X,\Y} 
		(\delta_{\cP_\F^\X})^\dagger 
		\left(\Gamma_\F[\X,\Y] \right)
		(\delta_{\cP_F^\Y} ) 
		\prod_{i\in [n]}\|\overline{\Gamma}^{\X_i,\Y_i}\|\\
	&=& 
		\delta_{\cP_\F}^\dagger 
		\Gamma_\F 
		\delta_{\cP_\F} 
		\|\one_\innern - \I_\innern \| ^\outern
\end{eqnarray*}
which concludes the proof of the claim. 
\end{proof}

Using the fact that $\|\one_\innern - \I_\innern\| = \innern -1 $ and $\|({\one_\innern - \I_\innern)\hadamard \Delta_q}\| = \sqrt{\innern -1 }$ for any $q$, we immediately get Lemma~\ref{lem:composition} by substituting the values obtained in Claims~\ref{claim:comp1}, \ref{claim:comp2} and \ref{claim:comp3} 
into Definition~\ref{def:adv}. 
\end{proof}

\begin{proof}[Proof of Theorem~\ref{thm:bucket}]
Using the values computed in Section~\ref{sec:ksum} 
we get
\begin{eqnarray*}
T &=& \Omega \left({ \frac{\delta_{\cP_\F}^\dagger \Gamma_\F \delta_{\cP_\F} - \tau(s_\cP,s_\cQ) \|\Gamma_\F\|} {\|\Gamma_\F \hadamard \Delta_i\|}} \sqrt{\innern -1}\right)\\
&=& \Omega \left( n^{k/(k+1)} \sqrt{\innern -1} \left(\frac \epsn 2 -  \sqrt{\frac 1 m {n \choose k}} \,\right) \right)
\end{eqnarray*}

Suppose that $\epsn$ is non-vanishing. Since $m$ is chosen large enough to make ${\frac 1 m {n \choose k}}$ 
arbitrarily small, we get
\[ \frac T \epsn =\Omega{\left( \sqrt{l-1} n^{k/(k+1)} \right)}. \]
\par\mbox{}\\[-8ex]
\end{proof}

\section*{Acknowledgements}
\label{sec:acks}

\blue{We are grateful to Kassem Kalach, with whom this work has initiated many years ago.
Part of this work was performed when GB visited AB, then at \emph{QuSoft} in Amsterdam.

The work of AB is supported in part by the ERC Advanced Grant MQC.
The work of GB is supported in part by the Canadian Institute for Advanced Research (CIFAR), 
the Canada \mbox{Research} Chair program,
Canada's Natural Sciences and Engineering Research Council (NSERC)
and Qu\'ebec's Institut transdisciplinaire d'information quantique. 
The work of PH is supported in part by CIFAR and NSERC.
The work of MK is supported in part by EPSRC grant number EP1N003829/1
Verification of Quantum Technology.
The work of SL is supported in part by the European Union Seventh Framework
Programme (FP7/2007-2013) under grant agreement no.~600700 (QALGO)
and the French ANR Blanc grant RDAM ANR-12-BS02-005.
The work of LS is supported in part by NSERC discovery grant
and discovery accelerator supplements programs.
}


\end{document}